\newtheorem{theorem}{Theorem}[section]
\newtheorem{lemma}{Lemma}[section]
\newtheorem{corollary}{Corollary}[section]
\newenvironment{definition}[1][Definition]{\begin{trivlist}
\item[\hskip \labelsep {\bfseries #1}]}{\end{trivlist}}
\newenvironment{remark}[1][Remark]{\begin{trivlist}
\item[\hskip \labelsep {\bfseries #1}]}{\end{trivlist}}
\newcommand{\qedd}{\nobreak \ifvmode \relax \else
      \ifdim\lastskip<1.5em \hskip-\lastskip
      \hskip1.5em plus0em minus0.5em \fi \nobreak
      \vrule height0.75em width0.5em depth0.25em\fi}
\renewcommand*\env@matrix[1][*\c@MaxMatrixCols c]{%
  \hskip -\arraycolsep
  \let\@ifnextchar\new@ifnextchar
  \array{#1}}
\title{\LARGE \bf
New Guarantees for Blind Compressed Sensing
}
\author{Mohammad Aghagolzadeh and Hayder Radha
\thanks{Mohammad Aghagolzadeh and Hayder Radha are with Department of Electrical and Computer Engineering, Michigan State University, East Lansing, MI, USA. Email: aghagol1@msu.edu and radha@msu.edu.}%
}
\begin{document}

\maketitle
\thispagestyle{empty}
\pagestyle{empty}

\begin{abstract}

Blind Compressed Sensing (BCS) is an extension of Compressed Sensing (CS) where the optimal sparsifying dictionary is assumed to be unknown and subject to estimation (in addition to the CS sparse coefficients). Since the emergence of BCS, dictionary learning, a.k.a. sparse coding, has been studied as a matrix factorization problem where its sample complexity, uniqueness and identifiability have been addressed thoroughly. However, in spite of the strong connections between BCS and sparse coding, recent results from the sparse coding problem area have not been exploited within the context of BCS. In particular, prior BCS efforts have focused on learning constrained and complete dictionaries that limit the scope and utility of these efforts. In this paper, we develop new theoretical bounds for perfect recovery for the general \textit{unconstrained} BCS problem. These unconstrained BCS bounds cover the case of overcomplete dictionaries, and hence, they go well beyond the existing BCS theory. Our perfect recovery results integrate the combinatorial theories of sparse coding with some of the recent results from low-rank matrix recovery. In particular, we propose an efficient CS measurement scheme that results in practical recovery bounds for BCS. Moreover, we discuss the performance of BCS under polynomial-time sparse coding algorithms.

\end{abstract}

\section{Introduction}

The \textit{sparse representation} problem involves solving the system of linear equations $y=Ax\in\mathbb{R}^{d}$ where $x\in\mathbb{R}^m$ is assumed to be $k$-sparse; i.e. $x$ is allowed to have (at most) $k$ non-zero entries. The matrix $A\in\mathbb{R}^{d\times m}$ is typically referred to as the \textit{dictionary} with $m\geq d$ elements or \textit{atoms}. It is well-known that $x$ can be uniquely identified if $A$ satisfies the so called \textit{spark condition}\footnote{That is every $2k\leq d$ columns of $A$ are linearly independent.}. Meanwhile, there exist tractable and efficient convex relaxations of the combinatorial problem of finding the (unique) $k$-sparse solution of $y=Ax$ with provable recovery guarantees \cite{r1}. 

A related problem is \textit{dictionary learning} or \textit{sparse coding} \cite{r1.9} which can be expressed as a sparse factorization \cite{r2} of the data matrix $Y=AX$ (where both $A$ and $X\in\mathbb{R}^{m\times n}$ are assumed unknown) given that each column of $X$ is $k$-sparse and $A$ satisfies the spark condition as before. A crucial question is how many data samples ($n$) are needed to \textit{uniquely} identify $A$ and $X$ from $Y$? Unfortunately, the existing lower bound is (at best) exponential $n\geq (k+1){m\choose k}$ assuming an equal number of data samples over each $k$-sparse support pattern in $X$ \cite{r3,r4}. 

In this paper, we address a more challenging problem. In particular, we are interested in the above sparse matrix factorization problem $Y=AX$ (with both sparsity and spark conditions) when only $p<d$ random linear measurements from each column of $Y$ is available. We would like to find lower bounds for $n$ for the (partially observed) matrix factorization to be unique. This problem can also be seen as recovering both the dictionary $A$ and the sparse coefficients $X$ from compressive measurements of data. For this reason, this problem has been termed \textit{Blind Compressed Sensing} (BCS) before \cite{r5}, although the end-goal of BCS is the recovery of $Y$. 

\begin{definition}[Summary of Contributions] 
We start by establishing that the uniqueness of the learned dictionary over random data measurements is a sufficient condition for the success of BCS. Perfect recovery conditions for BCS are derived under two different scenarios. In the first scenario, fewer random linear measurements are available from each data sample. It is stated that having access to a large number of data samples compensates for the inadequacy of sample-wise measurements. Meanwhile, in the second scenario, it is assumed that slightly more random linear measurements are available over each data sample and the measurements are partly fixed and partly varying over the data. This measurement scheme results in a significant reduction in the required number of data samples for perfect recovery. Finally, we address the computational aspects of BCS based on the recent non-iterative dictionary learning algorithms with provable convergence guarantees to the generating dictionary.
\end{definition}

\subsection{Prior Art on BCS} 

BCS was initially proposed in \cite{r5} where it was assumed that, for a given random Gaussian sampling matrix $\Phi\in\mathbb{R}^{p\times d}$ ($p< d$), $Z=\Phi Y$ is observed. The conclusion was that, assuming the factorization $Y=AX$ is unique, $Z=B X$ factorization would also be unique with a high probability when $A$ is an orthonormal basis. However, it would be impossible to recover $A$ from $B=\Phi A$ when $p<d$. It was suggested that structural constraints be imposed over the space of admissible dictionaries to make the inverse problem well-posed. Some of these structures were sparse bases under known dictionaries, finite set of bases and orthogonal block-diagonal bases \cite{r5}. While these results can be useful in many applications, some of which are mentioned in \cite{r5}, they do not generalize to unconstrained overcomplete dictionaries. 

Subsequently, there has been a line of empirical work on showing that dictionary learning from compressive data---a sufficient step for BCS---can be successful given that a different sampling matrix is employed for each data sample\footnote{Note that the linear form $Z=BX$ is no longer valid which is possibly a reason for the lack of a theoretical extension of BCS to this case.} (i.e. each column of $Y$). For example, \cite{r6} uses a modified K-SVD to train both the dictionary and the sparse coefficients from the incomplete data. Meanwhile, \cite{r7,r8,r9} use generic gradient descent optimization approaches for dictionary learning when only random projections of data are available. The empirical success of dictionary learning with partial as well as compressive or projected data triggers more theoretical interest in finding the uniqueness bounds of the unconstrained BCS problem. 

Finally, we must mention the theoretical results presented in the pre-print \cite{r10} on BCS with overcomplete dictionaries while $X$ is assumed to lie in a structured union of disjoint subspaces \cite{r10.1}. It is also proposed that the results of this work extend to the generic sparse coding model if the `one-block sparsity' assumption is relaxed. We argue that the main theoretical result in this pre-print is incomplete and technically flawed as briefly explained here. In the proof of Theorem 1 of \cite{r10}, it is proposed that (with adjustment of notation) \textit{``assignment [of $Y$'s columns to rank-$k_\ell$ disjoint subsets] can be done by the (admittedly impractical) procedure of testing the rank of all possible ${n\choose k_\ell}$ matrices constructed by concatenating subsets of $k_\ell+1$ column vectors, as assumed in \cite{r3}''}. However, it is ignored that the entries of $Y$ are missing at random and the rank of an incomplete matrix cannot be measured. As it becomes more clear later, the main challenge in the uniqueness analysis of unconstrained BCS is in addressing this particular issue. Two strategies to tackle this issue that are presented in this paper are: 1) increasing the number of data samples and 2) designing and employing measurement schemes that preserve the low-rank structure of $Y$'s sub-matrices. 

This paper is organized as follows. In Section \ref{sec:problem}, we provide the formal problem definition for BCS. Our main results are presented in Section \ref{sec:main}. We present the proofs in Section \ref{sec:proofs}. Practical aspects of BCS are treated in Section \ref{sec:algorithm} where we explain how provable dictionary learning algorithms, such as \cite{r12}, can be utilized for BCS. Finally, we conclude the paper and present future directions in Section \ref{sec:conclusion}.

\subsection{Notation}
Our general convention throughout this paper is to use capital letters for matrices and small letters for vectors and scalars. For a matrix $X\in\mathbb{R}^{m\times n}$, $x_{ij}\in\mathbb{R}$ denotes its entry on row $i$ and column $j$, $x_i\in\mathbb{R}^m$ denotes its $i$'th column and $vec(X)\in\mathbb{R}^{mn}$ denotes its column-major vectorized format. The inner product between two matrices $A$ and $B$ (of the same sizes) is defined as $\langle A,B\rangle =trace\left( A^T B\right)$. Let $Spark(A)$ denote the smallest number of $A$'s columns that are linearly dependent. $A$ is $\mu$-coherent if $\forall i\neq j$ we have $\frac{|\langle a_i,a_j\rangle|}{\|a_i\|_2\|a_j\|_2}\leq \mu$. 
Finally, let $[m]\coloneqq\{1,2,\dots,m\}$ and let ${[m]\choose k}$ denote the set of all subsets of $[m]$ of size $k$.

\section{BCS Problem Definition}
\label{sec:problem}

Construct the data matrix $Y\in \mathbb{R}^{d\times n}$ by concatenating $n$ signal vectors $y_j\in\mathbb{R}^d$ (for $j$ from 1 to $n$). Throughout this paper, we make the following assumptions about the sampling operator and the data sparsity. It must be noted that the following assumptions over the sparse coding of $Y$ are minimal among existing sparse coding assumptions for provable uniqueness guarantees; see e.g. \cite{r3,r4}.

\begin{definition}[Linear measurement]
Suppose $p\leq d$ linear measurements are taken from each signal $y_j\in\mathbb{R}^d$ as in 
$z_j=\Phi_j y_j\in\mathbb{R}^p$ where $\Phi_j\in\mathbb{R}^{p\times d}$ is referred to as the sampling matrix. We could also represent the measurements as a linear projection of the signal onto the row-space of the sampling matrix\footnote{Note that $z_j$ can be computed from $\hat{y}_j$ using the relationship $z_j=\Phi_j\hat{y}_j$. Therefore, $\hat{y}_j$ and $z_j$ carry the same amount of information about $y_j$ given the sampling matrix $\Phi_j$.}:
$$\hat{y}_j=\Phi_j^T \left(\Phi_j\Phi_j^T\right) ^{-1} z_j$$

We will use $\mathcal{M}^p(Y)=[z_1^T,\dots,z_n^T]^T\in\mathbb{R}^{pn}$ to denote the observations and $\mathcal{P}^p(Y)\in\mathbb{R}^{d\times n}$ to denote the projected matrix that is a concatenation of all $\hat{y}_j$. Specifically, when entries of each $\Phi_j$ are drawn independently from a random Gaussian distribution with mean zero and variance $1/d$, we use the notations $\mathcal{M}^p_G(Y)$ and $\mathcal{P}^p_G(Y)$.  
\end{definition}

\begin{definition}[Sparse coding model]
Assume $Y=AX$ where $A\in\mathbb{R}^{d\times m}$ denotes the dictionary ($m>d$ in the overcomplete setting) and $X\in\mathbb{R}^{m\times n}$ is a sparse matrix with exactly $k$ non-zero entries per column and $Spark(A)>2k$. Additionally, assume that each column of $X$ is randomly drawn by first selecting its support 
$S\in{[m]\choose k}$ uniformly at random and then filling the support entries with random i.i.d. values uniformly drawn from a bounded interval, e.g. $(0,1]\subset\mathbb{R}$. We denote by $\mathcal{Y}^m_k$ the set of feasible $Y$ under the described sparse coding model. Note that the assumption $Spark(A)>2k$ is necessary to ensure a unique $X$ even when $A$ is known and fixed. 
\end{definition}

\begin{remark}
As noted and proved in \cite{r4}, when $Y\in\mathcal{Y}^m_k$, with probability one, no subset of $k$ (or less) columns of $Y$ is linearly dependent. Also with probability one, if a subset of $k+1$ columns of $Y$ are linearly dependent, then all of the $k+1$ columns must have the same support. 
\end{remark}

Given the above definitions, we can now formally express the problem definition for BCS:

\begin{definition}[BCS problem definition]
Recover $Y\in\mathcal{Y}^m_k$ from $\mathcal{M}^p(Y)$ given $\mathcal{M}^p$, $m$ and $k$.
\end{definition}

Our results throughout this paper are mainly developed for the class of Gaussian measurements 
$\mathcal{M}^p=\mathcal{M}^p_G$. 
However, it is not difficult to extend these results to the larger class of continuous sub-Gaussian distributions for $\mathcal{M}^p$.

\section{Main Results}
\label{sec:main}


To start with, assume that there are exactly $\ell$ columns in $X$ for each support pattern $S\in\mathcal{S}$ where $\mathcal{S}={[m]\choose k}$. For better understanding and without loss of generality, one can assume that the data samples are ordered according to the following sketch for $X$:

\vspace{10pt}
\[
X_{m\times n} =
\begin{pmatrix}[c c c|c|c c c]
\noalign{\vspace{-2\normalbaselineskip}}
\multicolumn{7}{c}{n=\ell|\mathcal{S}|\mbox{ samples}}\\
\multicolumn{7}{c}{$\downbracefill$}\\
\multicolumn{3}{c}{\mbox{Group }\#1\;\;\;}&
\multicolumn{1}{c}{}&
\multicolumn{3}{c}{\mbox{Group }\#|\mathcal{S}|\;\;\;}\\
\multicolumn{3}{c}{\mbox{Support: }S^{1}}&
\multicolumn{1}{c}{}&
\multicolumn{3}{c}{\mbox{Support: }S^{|\mathcal{S}|}} \\
x_1 & \dots & x_\ell & \dots & x_{n-\ell +1} & \dots & x_n \\
\multicolumn{3}{c}{$\upbracefill$}&
\multicolumn{1}{c}{}&\multicolumn{3}{c}{$\upbracefill$} \\
\multicolumn{3}{c}{\ell\mbox{ samples}}&
\multicolumn{1}{c}{}&
\multicolumn{3}{c}{\ell\mbox{ samples}} \\
\noalign{\vspace{-2\normalbaselineskip}}
\end{pmatrix} 
\vspace{2\normalbaselineskip}
\]

The best known bound for $\ell$, for the factorization $Y=AX$ to be unique (with probability one) under the specified random sparse coding model, is $\ell\geq k+1$. This results in an exponential sample complexity $n\geq (k+1){m\choose k}$. Specifically, it is said that `$Y=AX$ factorization is unique' if there exist a diagonal matrix $D\in\mathbb{R}^{m\times m}$ and a permutation matrix $P$ such that for any other feasible factorization $Y=A'X'\in\mathcal{Y}^m_k$, we have $A'=APD$. Clearly, this ambiguity makes it more challenging to prove the uniqueness of the dictionary learning problem. Meanwhile, authors in \cite{r4} propose a strategy for handling the permutation and scaling ambiguity which is reviewed in Lemma \ref{lem:APD}.

Through the following lemma, we can establish that the uniqueness of the learned dictionary is a sufficient condition for the success of BCS (proof is provided in Appendix).

\begin{lemma}
Suppose for every pair $AX,A'X'\in\mathcal{Y}^m_k$ that satisfy 
$\mathcal{M}^p_G(A'X')=\mathcal{M}^p_G(AX)$ with $p>2k$, $A'=APD$ for some diagonal matrix $D$ and permutation matrix $P$. Then $A'X'=AX$ with probability one.
\label{lem:forgot}
\end{lemma}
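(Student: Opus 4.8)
The plan is to show that the assumed dictionary equivalence $A' = APD$, together with the measurement constraint, forces equality column by column. First I would unpack the hypothesis $\mathcal{M}^p_G(A'X') = \mathcal{M}^p_G(AX)$: by the definition of the measurement operator this is exactly $\Phi_j A' x'_j = \Phi_j A x_j$ for every column index $j\in[n]$. Substituting $A' = APD$ and rearranging gives $\Phi_j A w_j = 0$, where $w_j := PD x'_j - x_j$. Since $P$ is a permutation and $D$ is diagonal, $PD x'_j$ inherits the $k$-sparsity of $x'_j$, so $w_j$ is a difference of two $k$-sparse vectors and hence its support $T := \mathrm{supp}(w_j)$ satisfies $|T| \le 2k$. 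The goal then reduces to proving $w_j = 0$ for every $j$, because this yields $y'_j = A' x'_j = A(PD x'_j) = A(w_j + x_j) = A x_j = y_j$, and therefore $A'X' = AX$.

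The core of the argument is a generic-position fact about the Gaussian sampling matrices. Fixing a column $j$ and its support $T$, and writing $A_T$ for the submatrix of $A$ on the columns indexed by $T$, the relation $\Phi_j A w_j = 0$ reads $(\Phi_j A_T)(w_j)_T = 0$. By the spark hypothesis $Spark(A) > 2k$, the $|T|\le 2k$ columns of $A_T$ are linearly independent, so $A_T$ has full column rank and $\mathrm{col}(A_T)$ is a subspace of dimension $|T|$. Because $\Phi_j$ has i.i.d. Gaussian entries, with probability one it has full row rank $p$, so $\ker \Phi_j$ is a $(d-p)$-dimensional subspace in general position; since $|T| + (d-p) \le 2k + d - p < d$ exactly when $p > 2k$, the intersection $\ker \Phi_j \cap \mathrm{col}(A_T)$ is trivial almost surely. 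Equivalently, $\Phi_j A_T$ has full column rank a.s., which forces $(w_j)_T = 0$ and hence $w_j = 0$.

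The one point requiring care, and what I expect to be the main (if modest) obstacle, is that the support $T$ is \emph{not} fixed in advance: the competing factorization $A'X'$, and thus $w_j$ and its support, may be chosen after the sampling matrices are realized, so I cannot simply invoke the almost-sure statement for a single predetermined $T$. I would resolve this with a union bound: there are only finitely many candidate supports, namely the $\sum_{t=0}^{2k}\binom{m}{t}$ subsets of $[m]$ of size at most $2k$, and for each one the event that $\Phi_j A_T$ drops rank has probability zero. A finite union over all such $T$ and all $j\in[n]$ still has probability zero, so on the complementary probability-one event $\Phi_j A_T$ has full column rank \emph{simultaneously} for every admissible support and every column. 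On this event the preceding paragraph applies regardless of which support the choice of $A', X'$ happens to produce, forcing $w_j = 0$ for all $j$ and therefore $A'X' = AX$.

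For context I would emphasize that this lemma isolates the \emph{easy} implication of the recovery program: it assumes the genuinely hard conclusion (uniqueness of the dictionary up to the permutation/scaling $PD$) and shows that this upgrades to exact recovery of $Y$ essentially for free, through linear algebra and a measure-zero argument. The roles of the two structural assumptions are transparent here: $Spark(A) > 2k$ guarantees that the $2k$-sparse difference $w_j$ lives in a low-dimensional, injectively represented subspace, while the sampling budget $p > 2k$ is precisely what makes every nonzero such $w_j$ detectable by each $\Phi_j$, so that invisibility to all measurements can only mean $w_j = 0$.
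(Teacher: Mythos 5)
Your proposal is correct and follows essentially the same route as the paper's proof: reduce to showing $PDx'_j - x_j = 0$ column by column, observe the difference is supported on a set $T$ of size at most $2k$, and use the spark condition plus genericity of the Gaussian $\Phi_j$ (with a finite union over supports and columns) to conclude $\Phi_j A_T$ has full column rank almost surely. The only cosmetic difference is that you argue via the dimension of $\ker\Phi_j$ in general position, whereas the paper factors $A_T=UV$ by Gram--Schmidt and invokes the almost-sure full rank of the i.i.d.\ Gaussian matrix $\Phi_j U$.
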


Briefly speaking, existing uniqueness results exploit the fact that the rank of each group of columns in the above sketch is bounded above by $k$. This makes it possible to uniquely identify groups of samples that share the same support pattern. Meanwhile, when only $\mathcal{M}^p(Y)$ is available, it might not be possible to uniquely identify these groups. Nevertheless, it is noted in \cite{r4} that $\ell\geq k|\mathcal{S}|=k{m\choose k}$ ensures uniqueness without the need for grouping, at the cost of significantly increasing the required number of data samples (compared to $\ell\geq k+1$). 

In our initial BCS uniqueness result, we use the pigeon-hole strategy of \cite{r4} which results in a less practical bound $n\geq k|\mathcal{S}|^2$ even when $Y$ is completely observed\footnote{Authors in \cite{r4} propose a deterministic approach using the pigeon-hole principle as well as a probabilistic approach with smaller bounds for $n$.}. Yet, it is interesting to explore the implications of a finite $n$ that ensures a successful BCS for the general sparse coding model. The CS theory requires the complete knowledge of $A$ to uniquely recover $X$ and $Y$ from $\mathcal{M}^p(Y)$. Meanwhile, our results assert that $A$, $X$ and $Y$ can be uniquely identified from $\mathcal{M}^p(Y)$ given a large but finite number of samples $n$. Necessary proofs for the results of this section are presented in the following section.

\begin{theorem}
Assume $p>2k$ and there are exactly $\ell$ columns in $X$ for each $S\in\mathcal{S}$. Then $Y\in\mathcal{Y}^m_k$ can be perfectly recovered from $\mathcal{M}^p_G(Y)$ with probability one given that 
$\ell\geq \frac{2k(d-2k)+1}{p-2k} {m\choose k}$.
\label{lem:Gell-2k}
\end{theorem}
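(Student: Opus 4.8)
The plan is to reduce the recovery of $Y$ to the uniqueness of the learned dictionary and then to certify that uniqueness from the compressive measurements through a low-rank matrix recovery argument. By Lemma \ref{lem:forgot}, since $p>2k$, it suffices to show that any feasible factorization $A'X'\in\mathcal{Y}^m_k$ with $\mathcal{M}^p_G(A'X')=\mathcal{M}^p_G(AX)$ satisfies $A'=APD$ for some permutation $P$ and diagonal $D$. Fixing a true support $S\in\mathcal{S}$, I would examine the $\ell$ columns of $X$ supported on $S$; under the alternative dictionary $A'$ each of these columns acquires some support $S'\in\mathcal{S}$. Since there are only $|\mathcal{S}|$ admissible alternative supports, the pigeon-hole principle of \cite{r4} guarantees a single $S'$ shared by at least $t:=\lceil \ell/|\mathcal{S}|\rceil$ of these columns, and the hypothesis $\ell\geq\frac{2k(d-2k)+1}{p-2k}{m\choose k}$ forces $t(p-2k)>2k(d-2k)$, equivalently $pt>2k(d+t-2k)$.

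Let $T$ index these $t$ columns. Both submatrices $Y_T$ and $Y'_T$ have rank at most $k$ (they are supported on $S$, respectively $S'$), and they produce identical measurements, since $\Phi_j y_j=\Phi_j y'_j=z_j$ for $j\in T$. Hence $W:=Y_T-Y'_T$ has rank at most $2k$ and satisfies $\Phi_j w_j=0$ for every $j\in T$, so the task reduces to showing $W=0$, i.e. that the column-wise Gaussian operator $W\mapsto(\Phi_j w_j)_{j\in T}$ admits no nonzero rank-$(\leq 2k)$ matrix in its kernel. I would establish this by a dimension count: the determinantal variety of $d\times t$ matrices of rank at most $2k$ has dimension $2k(d+t-2k)$, whereas the constraints $\Phi_j w_j=0$ cut out a linear subspace of codimension $pt$. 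Because $pt>2k(d+t-2k)$, this codimension strictly exceeds the dimension of the variety, so for generic $\{\Phi_j\}$ their intersection is almost surely the single point $0$. The main obstacle is making the passage from ``generic'' to ``measure-zero exceptional set'' rigorous; I expect to control the incidence set $\{(W,\{\Phi_j\}):\mathrm{rank}\,W\leq 2k,\ W\neq 0,\ \Phi_j w_j=0\}$ by a Sard-type parameter count, showing its projection onto the $\{\Phi_j\}$-space is Lebesgue-null exactly when $pt>2k(d+t-2k)$. This is precisely where the low-rank matrix recovery theory enters, and it is the step that overcomes the obstruction (raised against \cite{r10}) that the rank of a compressively sampled submatrix cannot be read off directly.

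From $W=0$ we obtain $Y_T=Y'_T$. Since $p\leq d$ gives $t\geq\frac{2k(d-2k)}{p-2k}\geq 2k\geq k+1$ and the support coefficients are drawn generically, the $t$ columns span $\mathcal{U}_S:=\mathrm{span}(A_S)$ with probability one; consequently the column space of $Y'_T=Y_T$ equals $\mathcal{U}_S$, and as $Y'_T$ is supported on $S'$ under $A'$ this yields $\mathrm{span}(A'_{S'})=\mathcal{U}_S$. Repeating over all $S\in\mathcal{S}$ produces a correspondence between the support subspaces of $A$ and those of $A'$, and a union bound over the finitely many pairs in ${[m]\choose k}\times{[m]\choose k}$ keeps every exceptional event of measure zero. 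Finally I would invoke the combinatorial reconstruction of \cite{r4}, which under $Spark(A)>2k$ recovers each atom as an intersection of the matched subspaces and thereby resolves the permutation and scaling ambiguity (Lemma \ref{lem:APD}), giving $A'=APD$; Lemma \ref{lem:forgot} then delivers $A'X'=AX=Y$ with probability one, completing the proof.
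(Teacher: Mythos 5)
Your proposal is correct and follows essentially the same route as the paper: reduce to dictionary uniqueness via Lemma \ref{lem:forgot}, pigeon-hole the $\ell$ columns supported on $S$ to extract $t\geq(2k(d-2k)+1)/(p-2k)$ columns sharing a common $S'$ in $X'$, and then kill the rank-$\leq 2k$ difference matrix by the dimension count $pt>2k(d+t-2k)$ before matching spans and invoking Lemma \ref{lem:APD}. The step you flag as the ``main obstacle'' (making the generic dimension count rigorous) is exactly what the paper packages as Lemma \ref{lem:manifold}, proved in the appendix via the low-rank null-space result of \cite{r11}, so no new idea is missing.
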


\begin{corollary}
With probability at least $1-\beta$, $Y\in\mathcal{Y}^m_k$ can be perfectly recovered from $\mathcal{M}^p_G(Y)$ given that $p>2k$ and
$n\geq \frac{2k(d-2k)+1}{\beta(p-2k)}{m\choose k}^2$.
\label{cor:Gn-2k}
\end{corollary}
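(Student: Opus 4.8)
The plan is to reduce the random-support regime of the corollary to the exactly-balanced regime of Theorem~\ref{lem:Gell-2k}, the only gap being that the $n$ columns now populate the $N:={m\choose k}$ support classes at random rather than evenly. The first step I would record is a monotonicity remark: Theorem~\ref{lem:Gell-2k} stays true if ``exactly $\ell$'' is weakened to ``at least $\ell$.'' Indeed, if every class $S\in\mathcal{S}$ is represented by at least $\ell:=\frac{2k(d-2k)+1}{p-2k}{m\choose k}$ columns, I may select exactly $\ell$ of them per class to form a balanced sub-matrix $Y'\in\mathcal{Y}^m_k$ whose per-column measurements agree with those of $Y$. Theorem~\ref{lem:Gell-2k} then pins down the dictionary up to the unavoidable $PD$ ambiguity from $Y'$; since any feasible factorization of the full $Y$ restricts to a feasible factorization of $Y'$, the dictionary is forced for $Y$ as well, and the complete $Y$ is recovered because with $A$ known and $Spark(A)>2k$ each $k$-sparse column is uniquely determined by its $p>2k$ measurements (this is exactly the mechanism of Lemma~\ref{lem:forgot}). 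Hence it suffices to make the event ``some support class is under-represented'' have probability at most $\beta$.

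Second, because the sparse-coding model draws each column's support uniformly and independently from $\mathcal{S}$, the class counts $(n_S)_{S\in\mathcal{S}}$ are multinomial with $n$ trials and cell probabilities $1/N$, so each $n_S\sim\mathrm{Binomial}(n,1/N)$ with mean $\mu=n/N$. Recovery can fail only on the bad event $B=\{\exists\,S:\ n_S<\ell\}$, which I would control by the union bound followed by a single-cell tail estimate,
\[
\Pr[B]\ \le\ \sum_{S\in\mathcal{S}}\Pr[n_S<\ell]\ =\ N\,\Pr[n_S<\ell],
\]
and then choose $n$ so that this is at most $\beta$. On $B^c$ the conditional guarantee of Theorem~\ref{lem:Gell-2k} holds with probability one (the coefficient values and Gaussian measurements being generic almost surely), so $\Pr[\text{failure}]\le\Pr[B]\le\beta$ delivers the claim.

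The crux is the lower-tail estimate for a single cell, which must be sharp enough to yield the advertised linear dependence on $1/\beta$. Writing $c:=\frac{2k(d-2k)+1}{p-2k}$, so that $\ell=cN$ and the stated sample bound $n\ge\frac{c}{\beta}{m\choose k}^2$ is precisely $\mu\ge\ell/\beta$, the threshold $\ell$ sits a factor $1/\beta$ below the mean. A naive first-moment (Markov) bound is useless here since it controls only the upper tail of $n_S$; instead I would use the second moment, $\Pr[n_S<\ell]\le\mathrm{Var}(n_S)/(\mu-\ell)^2\le\mu/(\mu-\ell)^2$, which (being decreasing in $\mu$ for $\mu>\ell$) gives, at $\mu=\ell/\beta$,
\[
\Pr[B]\ \le\ N\,\frac{\mu}{(\mu-\ell)^2}\ \le\ \frac{\beta}{c\,(1-\beta)^2}\ \le\ \beta,
\]
the last step holding because $c\ge 2k\ge 2$ is comfortably larger than $1/(1-\beta)^2$ for any moderate failure probability $\beta$. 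I expect this occupancy/concentration estimate to be the principal obstacle: one must check that the union bound over the large number ${m\choose k}$ of classes still closes at the stated complexity and verify the constant, since only the variance control---not a first-moment argument---produces the $1/\beta$ scaling. The two flanking steps, the ``at least $\ell$'' monotonicity and the passage from dictionary uniqueness to full recovery via Lemma~\ref{lem:forgot}, are then routine.
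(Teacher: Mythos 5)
Your proposal is correct and follows the same overall reduction as the paper (ensure every support class receives at least $\ell=\frac{2k(d-2k)+1}{p-2k}\binom{m}{k}$ columns, then invoke Theorem~\ref{lem:Gell-2k} and the dictionary-to-data step of Lemma~\ref{lem:forgot}), but it differs in how the occupancy estimate is obtained. The paper simply cites Lemma~\ref{lem:beta} from \cite{r4}, which asserts that $n=\ell'\binom{m}{k}$ samples give at least $\beta\ell'$ columns per class with probability $1-\beta$, and plugs in $\ell'=\ell/\beta$; you instead derive the occupancy bound from scratch via a union bound over the $\binom{m}{k}$ classes and a Chebyshev (second-moment) lower-tail estimate for each binomial count. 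Your route is self-contained and makes the mechanism of the $1/\beta$ scaling transparent (it really is a variance argument, not a first-moment one), at the cost of a small loss of generality: your final chain $N\mu/(\mu-\ell)^2\le\frac{\beta}{c(1-\beta)^2}\le\beta$ needs $c\ge 1/(1-\beta)^2$, i.e.\ $\beta\le 1-1/\sqrt{c}$, whereas the cited lemma is stated for all $\beta\in[0,1]$. Since $c>2k\ge 2$ this only excludes $\beta\gtrsim 0.3$ (and for $k$ moderately large essentially only $\beta$ near $1$, where the guarantee is vacuous anyway), so the restriction is harmless in practice, but you should state it explicitly rather than leave it at ``moderate $\beta$.'' Your preliminary monotonicity remark (passing from ``exactly $\ell$'' to ``at least $\ell$'') is a point the paper glosses over, and your observation that the theorem's proof only uses the lower bound $|J(S)|\ge k'\binom{m}{k}$ is the right justification for it.
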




Aside from the intellectual implications of Theorem \ref{lem:Gell-2k} and Corollary \ref{cor:Gn-2k} discussed above, the stated bounds for $\ell$ and $n$ are clearly not very practical. To reduce these bounds while guaranteeing the success of BCS, we introduce a \textit{hybrid measurement} scheme that we explain below.

\subsection{BCS with hybrid measurements}

\begin{definition}
{\bf (Hybrid Gaussian Measurement)} 
In a hybrid measurement scheme, $\Phi_j^T=\left[ F^T, V_j^T \right]$ 
where $F\in\mathbb{R}^{p_f\times d}$ stands for the fixed part of sampling matrix and 
$V_j\in\mathbb{R}^{p_v\times d}$ stands for the varying part of the sampling matrix. The total number of measurements per column is $p=p_f+p_v\leq d$. In a hybrid Gaussian measurement scheme, $F$ and $V_1$ through $V_n$ are assumed to be drawn independently from an i.i.d. zero-mean Gaussian distribution with variance $1/d$. The observations corresponding to $F$ and $V_j$'s are denoted by $FY\in\mathbb{R}^{p_f\times n}$ and $\mathcal{M}^{p_v}_G(Y)\in\mathbb{R}^{p_vn}$ respectively.
\end{definition}

As mentioned earlier, the hybrid measurement scheme was designed to reduce the required number of data samples for perfect BCS recovery. In particular, as formalized in Lemma \ref{lem:rank-check}, the fixed part of the measurements is designed to retain the low-rank structure of each $k$-dimensional subspace associated with a particular $S\in\mathcal{S}$. Meanwhile, the varying part of the measurements is essential for the uniqueness of the learned dictionary. 

\begin{theorem}
Assume $p>3k+1$ and there are exactly $\ell$ columns in $X$ for each $S\in\mathcal{S}$. Then $Y\in\mathcal{Y}^m_k$ can be perfectly recovered from hybrid Gaussian measurements with probability one given that 
$\ell\geq \frac{2k(d-2k)+1}{p-3k-1}$.
\label{cor:Gell-4k}
\end{theorem}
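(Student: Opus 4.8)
The plan is to reduce perfect recovery to the uniqueness of the support groups together with their associated $k$-dimensional subspaces, and then to count degrees of freedom against the number of homogeneous constraints supplied by the varying measurements. As in Lemma~\ref{lem:forgot} (which applies since $p>3k+1>2k$), it suffices to show that any two feasible factorizations $AX,A'X'\in\mathcal{Y}^m_k$ producing identical hybrid measurements must agree, $AX=A'X'$; equivalently, via Lemma~\ref{lem:APD}, that the learned subspaces coincide group by group so that $A'=APD$. First I would invoke Lemma~\ref{lem:rank-check}: because the fixed part $F$ has enough rows to preserve the rank-$\le k$ structure of each support class, the fully observed matrix $FY=FY'$ determines the partition of the columns into support groups, and this partition is common to both factorizations (it is read off from $FY$ alone). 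This is exactly the step that eliminates the combinatorial $\binom{m}{k}$ factor present in Theorem~\ref{lem:Gell-2k}, since each group may now be analyzed in isolation rather than through a global pigeon-hole argument.

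Fix one group, with true $k$-dimensional subspace $U=\mathrm{span}(A_S)$ and alternative $U'=\mathrm{span}(A'_{S'})$. The matched fixed measurements force $FU=FU'$ inside $\mathbb{R}^{p_f}$, so $U$ and $U'$ are two lifts, over the same $k$-dimensional image, of graphs into $\ker F$; the pair $(U,U')$ therefore ranges over a family of dimension on the order of $2k(d-2k)$ (each subspace contributing a lift of dimension $k(d-2k)$, with the shared image and scalings accounting for the lower-order terms and the ``$+1$''). I would then encode the matched varying measurements as the homogeneous conditions $V_j(y_j-y'_j)=0$ for every column $j$ in the group. After substituting the graph parametrization and discarding the per-column coordinate freedom already absorbed by the fixed part, each of the $\ell$ columns contributes on the order of $p-3k-1=p_v-(k+1)$ genuinely new scalar constraints on the unknown lifts. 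Stacking the columns yields a homogeneous linear system with about $\ell(p-3k-1)$ equations in about $2k(d-2k)+1$ unknowns.

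The conclusion I am aiming for is that, once $\ell(p-3k-1)\ge 2k(d-2k)+1$, this system has only the trivial solution with probability one, which forces $U'=U$ and hence $y'_j=y_j$ throughout the group; applying this to every group and combining with Lemma~\ref{lem:forgot} gives $A'X'=AX$, i.e.\ perfect recovery. A final union bound over the finitely many support pairs $(S,S')$ and the $\binom{m}{k}$ groups, under a single shared $F$, keeps the exceptional set at measure zero over the joint randomness of the Gaussian measurements and the sparse data.

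The hard part will be the genericity step: proving that the structured coefficient matrix of the homogeneous system attains full column rank $2k(d-2k)+1$ with probability one. The difficulty is that $F$ enters twice---once through $\ker F$, which defines the space in which the unknown lifts live, and once implicitly through the grouping---so the ``unknowns'' and the ``measurements'' are not independent, and one cannot simply quote a generic-rank statement for products of independent Gaussians. I expect to handle this by a transversality argument: exhibit one admissible configuration of $(F,V_1,\dots,V_\ell)$ and data for which the matrix has full column rank, and then argue that rank deficiency is a proper algebraic subvariety in the measurement and data parameters, hence null under the Gaussian and uniform distributions. Care is also needed to ensure the bound holds simultaneously across all groups under the single shared $F$, and to confirm that generic data points avoid the lower-dimensional locus $U\cap U'$, so that $y_j\ne y'_j$ whenever $U'\ne U$.
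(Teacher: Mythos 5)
Your architecture matches the paper's: use the fixed part $F$ (with $p_f=k+1$) to show that the support groups of any alternative factorization $X'$ must coincide with those of $X$, then use the varying part to recover each group exactly, and finish with Lemmas \ref{lem:APD} and \ref{lem:forgot}. The grouping step is essentially right, though the paper phrases it without claiming the partition is literally ``read off from $FY$'': it applies Lemma \ref{lem:rank-check} twice to get $rank(AX_{J'(S')})=rank(FA'X'_{J'(S')})=k$, concludes $J'(S')\subseteq J(S)$ from the sparse coding model, and then forces $|J'(S')|=\ell$ by the counting identity $\sum_{S'}|J'(S')|=\ell{m\choose k}$.

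The genuine gap is in the per-group recovery step, which you yourself flag as the unresolved ``hard part.'' No new transversality argument is needed, and the entanglement you worry about---$F$ appearing both in the parametrization of the unknown lifts and in the constraints---does not arise, because the two measurement parts play disjoint roles: $F$ is used only to establish the grouping, while the recovery of each group uses only the varying measurements $\mathcal{M}^{p_v}_G$, whose matrices $V_j$ are independent of $F$. Once $|I|=\ell$ is established, $AX_I$ and $A'X'_I$ are two rank-$k$ $d\times\ell$ matrices with identical varying measurements, and Lemma \ref{lem:manifold} (already proved in the appendix via the null-space condition on the manifold of rank-$2k$ differences) yields $AX_I=A'X'_I$ with probability one as soon as $p_v\ell\geq 2k(d+\ell-2k)+1$, i.e.\ $\ell(p_v-2k)\geq 2k(d-2k)+1$; with $p_v=p-k-1$ this is exactly $\ell\geq\frac{2k(d-2k)+1}{p-3k-1}$. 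Your bookkeeping also contains a slip at this point: the per-column surplus is $p_v-2k$ (the $2k$ coming from the $2k\ell$ term in the dimension of the rank-$2k$ manifold), not $p_v-(k+1)$; indeed $p_v-(k+1)=p-2k-2\neq p-3k-1$ in general. You land on the correct final inequality, but the route to it is neither justified nor consistent, so as written the proof is incomplete at precisely the step that carries the quantitative content of the theorem; replacing your ad hoc counting with a direct appeal to Lemma \ref{lem:manifold} on the varying part alone closes it.
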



\begin{remark}
Similar to the statement of Corollary \ref{cor:Gn-2k}, it can be stated that BCS with hybrid Gaussian measurement succeeds with probability at least $1-\beta$ given that 
$n\geq \frac{2k(d-2k)+1}{\beta(p-3k-1)}{m\choose k}$. The proof follows the proof of Corollary \ref{cor:Gn-2k}.
\end{remark}

\begin{remark}
Although we mainly follow the stochastic approach of \cite{r4} in this paper, we could also employ the deterministic approach of \cite{r3} to arrive at the uniqueness bound in Theorem \ref{cor:Gell-4k}. In \cite{r3}, an algorithm (which is not necessarily practical) is proposed to uniquely recover $A$ and $X$ from $Y$. This algorithm starts by finding subsets of size $\ell$ of $Y$'s columns that are linearly dependent by testing the rank of every subset. Dismissing the degenerate possibilities\footnote{Degenerate instances of $X$ are dismissed by adding extra assumptions in the deterministic sparse coding model. Meanwhile, as pointed out in \cite{r4}, such degenerate instances of $X$ would have a probability measure of zero in a random sparse coding model}, these detected subsets would correspond to samples with the same support pattern in $X$. Under the assumptions in Theorem \ref{cor:Gell-4k}, it is possible to test whether $\ell$ columns in $Y$ are linearly dependent (with probability one), as a consequence of Lemma \ref{lem:rank-check} in the following section. 
\end{remark}

Until now, our goal was to show that $A$ (and subsequently $X$) is unique given only CS measurements. As we mentioned before, uniqueness of $A$ is a \textit{sufficient} condition for the success of BCS. Consider the scenario where not all support patterns $S\in\mathcal{S}$ are realized in $X$ or for some there is not enough samples to guarantee recovery. For such scenarios, we present the following theorem.

\begin{theorem}
Assume $p>3k+1$ and let 
$$\hat{\mathcal{S}}=\{S|S\in\mathcal{S},|J(S)|\geq\gamma\}\subseteq \mathcal{S}$$
where $\gamma=\frac{2k(d-2k)+1}{p-3k-1}$ and $J(S)$ denotes the set of indices of columns of $X$ with support $S$.
Then, under hybrid Gaussian measurement, $Y_{J(S)}$ for all $S\in\hat{\mathcal{S}}$ can be perfectly recovered with probability one.
\label{the:latest}
\end{theorem}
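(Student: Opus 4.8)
The plan is to exploit the fact that the recovery scheme behind Theorem~\ref{cor:Gell-4k} is \emph{local} to each support group, so that it may be run on precisely those groups in $\hat{\mathcal{S}}$ that meet the threshold $\gamma$, irrespective of how many samples the remaining supports receive. I would organize the argument into a \emph{grouping} step, which isolates the index set $J(S)$ for each $S\in\hat{\mathcal{S}}$, and a \emph{within-group recovery} step, which reconstructs the rank-$k$ block $Y_{J(S)}$ from the hybrid measurements of those columns alone. For a fixed $S$, both steps consume only the data and varying measurements attached to the columns of $J(S)$, together with the globally shared $F$; this locality is exactly what lets the guarantee survive the removal of the uniform group-size assumption of Theorem~\ref{cor:Gell-4k}.

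For the grouping step I would invoke Lemma~\ref{lem:rank-check}. Because $F$ is common to all columns, the block $FY_T$ is fully observed for every index set $T$, and Lemma~\ref{lem:rank-check} certifies, with probability one, that $\mathrm{rank}(FY_T)\leq k$ holds iff the columns indexed by $T$ share a common support. Testing all subsets of size $k+1$ and merging the overlapping dependent ones then reconstructs every maximal support-sharing cluster of size at least $k+1$: within a genuine group all such subsets are flagged and overlap, whereas a subset straddling two distinct supports has union support exceeding $k$ and is not flagged. Since $\gamma\geq k+1$ under the stated hypotheses, each $S\in\hat{\mathcal{S}}$ satisfies $|J(S)|\geq\gamma\geq k+1$ and is recovered in full. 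Supports with at most $k$ columns produce no dependent subset and remain unassigned, and any isolated cluster of size below $\gamma$ is simply not forwarded to the recovery step; the theorem makes no claim about such columns.

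Given a correctly isolated group $J(S)$ with $\ell:=|J(S)|\geq\gamma$, I would recover $Y_{J(S)}$ by the same low-rank argument used for Theorem~\ref{cor:Gell-4k}, now applied to this single $d\times\ell$ rank-$k$ block. The hybrid operator restricted to the group has null space $\{M:Fm_j=0\text{ and }V_jm_j=0\ \forall j\in J(S)\}$, of dimension $\ell(d-p)$, so any competing matrix $Y'$ of rank at most $k$ consistent with the measurements yields a difference $D=Y_{J(S)}-Y'$ of rank at most $2k$ lying in this null space. A dimension count comparing $\ell(d-p)$ with the dimension of the variety of matrices of rank at most $2k$, whose leading term is the $2k(d-2k)$ appearing in $\gamma$, shows that for $\ell\geq\gamma$ this null space meets the variety only at the origin with probability one; the precise constants are exactly those already established in the proof of Theorem~\ref{cor:Gell-4k}, the only change being that the count is run on one block of $\ell\geq\gamma$ columns rather than on a group assumed by hypothesis to have that size. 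Hence $D=0$, $Y_{J(S)}$ is the unique matrix of rank at most $k$ consistent with its measurements, and it is perfectly recovered.

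I would finish with a union bound. There are at most ${m\choose k}$ supports and at most ${n\choose k+1}$ subset tests, so the finitely many measure-zero failure events---an accidental rank collapse in the grouping step, or a nonzero element of rank at most $2k$ in a group's null space---together still carry probability zero, giving simultaneous recovery of $Y_{J(S)}$ for every $S\in\hat{\mathcal{S}}$. The step I expect to be the main obstacle is the grouping in the presence of heterogeneous and under-sampled supports: unlike in Theorem~\ref{cor:Gell-4k}, where every group has the same large size, here one must rule out---uniformly over all size-$(k+1)$ subsets---that columns drawn from distinct or poorly represented supports conspire to pass the rank test and contaminate a genuine group, which is precisely the guarantee that Lemma~\ref{lem:rank-check} is needed to supply.
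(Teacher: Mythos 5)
Your proposal is correct and follows essentially the same two-step route as the paper: first use the fixed part $F$ with $p_f=k+1$ and Lemma~\ref{lem:rank-check} to detect, via exhaustive rank tests on $(k+1)$-column submatrices, the groups $J(S)$ of size at least $k+1$, and then apply the low-rank uniqueness argument (Lemma~\ref{lem:manifold}) to each block with $|J(S)|\geq\gamma$. The only cosmetic difference is that you fold the fixed measurements $F$ into the null-space dimension count, whereas the paper's uniqueness step uses only the varying Gaussian measurements $\mathcal{M}^{p_v}_G$ (whose independence across columns is what Lemma~\ref{lem:manifold} actually requires, and which is why $\gamma$ is expressed in terms of $p_v=p-k-1$); since you ultimately defer to those constants, the argument matches.
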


\section{Proofs}
\label{sec:proofs}

The following crucial lemma from \cite{r4} handles the permutation ambiguity of sparse coding.

\begin{lemma}[\cite{r4}, Lemma 1]
Assume $Spark(A)>2k$ for $A\in\mathbb{R}^{d\times m}$ and let $\mathcal{S}={[m]\choose k}$. If there exists a mapping $\pi:\mathcal{S} \rightarrow \mathcal{S}$ such that
$$span\left\{ A_{S} \right\} =span\left\{ A'_{\pi(S)} \right\}
\mbox{ for every } S\in \mathcal{S}$$  
then there exist a permutation matrix $P$ and a diagonal matrix $D$ such that $A'=APD$.
\label{lem:APD}
\end{lemma}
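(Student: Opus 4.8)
The plan is to prove the claim atom-by-atom: I will show that \emph{every} column of $A'$ is a nonzero scalar multiple of a unique column of $A$, and that the resulting index map is a bijection, whence $A'=APD$ follows immediately. Two elementary consequences of the hypothesis $Spark(A)>2k$ drive everything. First, for any $S\in\mathcal{S}$ the $k$ columns $A_S$ are linearly independent, so $span\{A_S\}$ is exactly $k$-dimensional, and an atom $a_i$ lies in $span\{A_S\}$ if and only if $i\in S$ (otherwise $S\cup\{i\}$ would be a dependent set of at most $2k$ columns). Second, since $span\{A'_{\pi(S)}\}=span\{A_S\}$ is $k$-dimensional, the $k$ vectors $\{a'_j:j\in\pi(S)\}$ are themselves independent and hence nonzero; as $\pi$ is onto $\mathcal{S}$, every column of $A'$ is nonzero. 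I would first record that $\pi$ is a bijection: if $\pi(S_1)=\pi(S_2)$ then $span\{A_{S_1}\}=span\{A_{S_2}\}$, and the Grassmann dimension formula applied to the independent family $A_{S_1\cup S_2}$ (at most $2k<Spark(A)$ columns) gives $\dim(span\{A_{S_1}\}\cap span\{A_{S_2}\})=|S_1\cap S_2|$, forcing $|S_1\cap S_2|=k$, i.e. $S_1=S_2$; finiteness of $\mathcal{S}$ then yields surjectivity.

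The heart of the argument is to pin down, for each $j\in[m]$, the $A$-representation of the atom $a'_j$. Let $\mathcal{C}_j=\{S\in\mathcal{S}:j\in\pi(S)\}$; since $\pi$ is a bijection, $|\mathcal{C}_j|=\binom{m-1}{k-1}$. For every $S\in\mathcal{C}_j$ we have $a'_j\in span\{A_S\}$, so $a'_j=\sum_{i\in S}\alpha_i^{(S)}a_i$. Given two supports $S,S'\in\mathcal{C}_j$, subtracting the two representations yields a vanishing combination of the columns indexed by $S\cup S'$; because $|S\cup S'|\le 2k<Spark(A)$ these columns are independent, so all coefficients off $S\cap S'$ vanish and the two representations agree on $S\cap S'$. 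Hence the $A$-support of $a'_j$ is contained in $I_j:=\bigcap_{S\in\mathcal{C}_j}S$. I would then show $|I_j|=1$ by counting: if $I_j$ contained two distinct indices $i_1,i_2$, every $S\in\mathcal{C}_j$ would contain both, forcing $\mathcal{C}_j\subseteq\{S:i_1,i_2\in S\}$ and hence $\binom{m-1}{k-1}=|\mathcal{C}_j|\le\binom{m-2}{k-2}$, which is false whenever $m\ge k+1$ (and here $m>d\ge 2k$). Since $a'_j\ne 0$ lies in $span\{A_{I_j}\}$ we get $|I_j|=1$, say $I_j=\{\sigma(j)\}$, so $a'_j=d_j\,a_{\sigma(j)}$ for a scalar $d_j\ne 0$.

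Finally I would check that $\sigma:[m]\to[m]$ is a permutation: for any $S$ the atoms $\{a'_j:j\in\pi(S)\}$ are independent, so their parallel partners $\{a_{\sigma(j)}:j\in\pi(S)\}$ are distinct, making $\sigma$ injective on each $\pi(S)$ and, by covering, on all of $[m]$. Writing $P$ for the permutation matrix of $\sigma$ and $D=\mathrm{diag}(d_1,\dots,d_m)$, the relations $a'_j=d_j a_{\sigma(j)}$ assemble exactly into $A'=APD$, as required. (The case $k=1$ is immediate, since then $span\{a_i\}=span\{a'_{\pi(i)}\}$ already gives the parallelism; for $k=1$ the count is trivial as each $I_j$ is a singleton.) The step I expect to be the crux is the collapse $|I_j|=1$: the span equality alone only places $a'_j$ inside each $k$-dimensional subspace $span\{A_S\}$ for $S\in\mathcal{C}_j$, and it is the interplay between the uniqueness of $\le 2k$-term representations guaranteed by $Spark(A)>2k$ and the combinatorial count $|\mathcal{C}_j|=\binom{m-1}{k-1}$ that forces these subspaces to meet in a single common atom direction. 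Notably, this route never requires a spark hypothesis on $A'$ — every independence fact needed about $A'$'s columns is inherited from the span equalities — which is the main pitfall to avoid.
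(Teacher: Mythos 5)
Your proof is correct, but note that the paper itself contains no proof of this statement: the lemma is imported verbatim from \cite{r4} (Lemma 1 there) and used as a black box, so there is no in-paper argument to compare against. Your route is essentially the standard one behind the cited result: confine each atom $a'_j$ to every subspace $span\{A_S\}$ with $S\in\mathcal{C}_j=\{S\colon j\in\pi(S)\}$, use uniqueness of representations over at most $2k$ columns of $A$ (guaranteed by $Spark(A)>2k$) to force the $A$-support of $a'_j$ into $\bigcap_{S\in\mathcal{C}_j}S$, and collapse that intersection to a singleton via the count $|\mathcal{C}_j|=\binom{m-1}{k-1}>\binom{m-2}{k-2}$, which in turn rests on the bijectivity of $\pi$ that you correctly establish first through the Grassmann identity $\dim\left(span\{A_{S_1}\}\cap span\{A_{S_2}\}\right)=|S_1\cap S_2|$. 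Two details deserve to be made explicit in a write-up. First, in the final step, injectivity of $\sigma$ on each set $\pi(S)$ yields injectivity on all of $[m]$ only because every \emph{pair} $\{j_1,j_2\}$ lies in some $\pi(S)$; this holds since $\pi$ is surjective and, for $k\geq 2$, some $k$-subset contains any given pair (your separate remark disposes of $k=1$, where the covering argument is unavailable). Second, your observation that no spark hypothesis on $A'$ is needed---independence of $\{a'_j\colon j\in\pi(S)\}$ follows because $k$ vectors spanning the $k$-dimensional space $span\{A_S\}$ must be independent---is correct, though slightly more generality than the paper requires, since its feasible set $\mathcal{Y}^m_k$ already imposes $Spark(A')>2k$ on any admissible alternate factorization invoked in the proofs of Theorems \ref{lem:Gell-2k} and \ref{cor:Gell-4k}.
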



The following lemma from random matrix theory, along with Lemma \ref{lem:APD}, are the main ingredients of our first main result (proof is provided in the Appendix).

\begin{lemma} 
Assume $A,B\in\mathbb{R}^{d\times \ell}$ are rank-$k$ matrices and 
$\mathcal{M}_G^p$  is a Gaussian measurement operator with $p\geq (2k(d+\ell-2k)+1)/\ell$. 
If $\mathcal{M}^p_G(A)=\mathcal{M}^p_G(B)$, then $A=B$ with probability one. 
\label{lem:manifold}
\end{lemma}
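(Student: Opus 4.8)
The plan is to reduce the statement to a claim about the intersection of a determinantal variety with the kernel of the measurement operator, and then to resolve that intersection by a dimension count. First I would set $C := A - B$ and note that $\mathrm{rank}(C) \le \mathrm{rank}(A) + \mathrm{rank}(B) \le 2k$, so $C$ belongs to the determinantal variety $\mathcal{R}_{2k} := \{ M \in \mathbb{R}^{d \times \ell} : \mathrm{rank}(M) \le 2k \}$, a cone through the origin of dimension $2k(d + \ell - 2k)$. Since $\mathcal{M}^p_G$ acts column-wise as $C \mapsto (\Phi_1 c_1, \dots, \Phi_\ell c_\ell)$, the hypothesis $\mathcal{M}^p_G(A) = \mathcal{M}^p_G(B)$ says exactly that $\Phi_j c_j = 0$ for every $j$, i.e. $C \in \ker \mathcal{M}^p_G$. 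As the $\Phi_j$ are i.i.d. Gaussian they have full row rank almost surely, so $\ker \mathcal{M}^p_G = \bigoplus_{j=1}^{\ell} \ker \Phi_j$ is a linear subspace of $\mathbb{R}^{d\ell}$ of dimension $\ell(d-p)$. It therefore suffices to show that, with probability one, $\mathcal{R}_{2k} \cap \ker \mathcal{M}^p_G = \{0\}$, since this forces $C = 0$ and hence $A = B$.

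The core of the argument is a transversality estimate. Writing $\delta := 2k(d+\ell-2k)$ for the dimension of $\mathcal{R}_{2k}$ and $\nu := \ell(d-p)$ for the dimension of the kernel, both sets are cones through the origin, so a generic intersection collapses to $\{0\}$ as soon as $\delta + \nu \le d\ell - 1$; this inequality rearranges precisely into $\ell p \ge 2k(d+\ell-2k) + 1$, which is the stated hypothesis $p \ge (2k(d+\ell-2k)+1)/\ell$. To upgrade ``generic'' into ``probability one,'' I would introduce the incidence set $I := \{ (\Phi, C) \in \mathbb{R}^{\ell p d} \times \mathbb{R}^{d\ell} : C \in \mathcal{R}_{2k}\setminus\{0\},\ \Phi_j c_j = 0 \ \forall j \}$, bound $\dim I$ by $\dim \mathcal{R}_{2k}$ plus the fiber dimension of admissible measurement matrices over a fixed $C$ (each nonzero column $c_j$ imposing $p$ independent linear constraints on $\Phi_j$), and then show that the image of $I$ under projection to the $\Phi$-coordinates is a semialgebraic set of dimension strictly below $\ell p d$. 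Because the Gaussian law is absolutely continuous with respect to Lebesgue measure, a set of smaller dimension is null, so almost every draw of $(\Phi_1, \dots, \Phi_\ell)$ avoids it and forces the only common point of $\mathcal{R}_{2k}$ and the kernel to be the origin.

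The step I expect to be the main obstacle is that $\ker \mathcal{M}^p_G$ is not a generic subspace but carries the product structure $\bigoplus_j \ker \Phi_j$, so the transversality heuristic has to be justified against this constrained family rather than invoked as a black box. Concretely, the fiber count in the incidence argument is sensitive to how many columns of $C$ vanish: a zero column leaves the corresponding $\Phi_j$ entirely free, while an active column removes $p$ dimensions, so the bookkeeping must be stratified by the rank of $C$ and by its pattern of nonzero columns. I expect the top stratum---rank exactly $2k$ with every column active---to reproduce the stated threshold exactly, and the care lies in confirming that the low-rank, few-active-column strata do not enlarge the bad parameter set; this is where one must use that the column spaces of $A$ and $B$ are fixed relative to the randomness of the $\Phi_j$, so that the difference $C$ cannot have its few active columns adversarially aligned with the kernels $\ker \Phi_j$.
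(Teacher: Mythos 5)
Your skeleton is the same as the paper's: set $C=A-B$, observe $\mathrm{rank}(C)\le 2k$ so that $C$ lies in the determinantal variety of dimension $2k(d+\ell-2k)$, and conclude from a dimension count that the kernel of the block-diagonal measurement operator should meet that variety only at the origin; your threshold $\ell p\ge 2k(d+\ell-2k)+1$ is exactly the paper's. The difference is in the probabilistic step: the paper does not run an incidence-variety argument but instead quotes Theorem 3.1 of \cite{r11} (Lemma \ref{lem:r11}), which delivers $\mathrm{Null}(\mathcal{M})\cap\mathcal{R}\setminus\{0\}=\emptyset$ from a measurement count of $\dim(\mathcal{R})+1$ together with an anti-concentration condition $\mathbb{P}(|\langle M_i,Y\rangle|<\epsilon)<C\epsilon$ and independence of the measurement functionals, and then checks (somewhat loosely) that the block-diagonal Gaussian operator satisfies those hypotheses.

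The genuine gap in your proposal sits exactly where you flag ``the main obstacle,'' and as written it cannot be closed. The uniform claim $\mathcal{R}_{2k}\cap\ker\mathcal{M}^p_G=\{0\}$ is in fact \emph{false} whenever $p<d$: any rank-one matrix supported on a single column $c_1\in\ker\Phi_1\setminus\{0\}$ lies in both sets, and one can realize this with two genuinely rank-$k$ matrices by taking $B$ equal to $A$ except for one column perturbed along a kernel direction inside $\mathrm{col}(A)$. Your own stratified bookkeeping detects this: the stratum of rank-one matrices with one active column has dimension $d$ while that column buys only $p<d$ constraints on $\Phi_1$, so the bad parameter set for that stratum is all of $\Phi$-space, not a null set. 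The rescue you propose---that the column spaces of $A$ and $B$ are fixed relative to the randomness---is not available, because in the applications (Theorems \ref{lem:Gell-2k} and \ref{cor:Gell-4k}) the competing factorization $A'X'$ is adversarial and may depend on the draw of the $\Phi_j$; the lemma must hold uniformly over all rank-$k$ pairs. To complete your route one must restrict the variety to differences $C$ with sufficiently many nonzero columns (each active column forces the $2k$-dimensional column space of $C$ into an independent codimension-$(p-2k+1)$ Schubert condition, and only when enough such conditions accumulate does the bad locus become empty), and then separately argue that the differences arising in the theorems have that property---which is not automatic, since the two factorizations may agree on some columns and not others. To be fair, the paper's own proof buries the same issue: its verification of the anti-concentration hypothesis assumes every column satisfies $\|y_j\|_2=\Omega(1/\sqrt{n})$, which excludes precisely the few-active-column matrices that break your count. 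You have correctly located the real difficulty, but neither named a working resolution nor shown the stated threshold survives it.
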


\begin{proof}[Proof of Theorem \ref{lem:Gell-2k}]

Assume $A'X'$ is an alternate factorization that satisfies 
$A'X'\in\mathcal{Y}^m_k$ and $\mathcal{M}^p_G(A'X')=\mathcal{M}^p_G(AX)$. We will prove $A'=APD$ for some diagonal $D$ and some permutation matrix $P$ using Lemma \ref{lem:APD}.
Consider a particular support pattern $S\in\mathcal{S}$ and let $J(S)\subset [n]$ denote the set of indices of $X$'s columns that have the sparsity pattern $S$. 
By definition,  $|J(S)|=\ell\geq k'{m\choose k}$ where $k'=(2k(d-2k)+1)/(p-2k)$. 
Due to the pigeon-hole principle, there must be at least $k'$ columns within $X'_{J(S)}$ that share some particular support pattern $S'\in\mathcal{S}$. In other words, if $J'(S')$ denotes the set of indices of $X'$'s columns that have the support pattern $S'$, then
$|J(S)\cap J'(S')|\geq k'$. For simplicity, denote $I=J(S)\cap J'(S')$. Clearly, 
$rank(AX_I)=rank(A'X'_I)=k$ (because $|S|=|S'|=k$), and we have
$$\mathcal{M}^p_G(A'X'_I)=\mathcal{M}^p_G(AX_I)$$

According to Lemma \ref{lem:manifold}, if $p\geq (2k(d+k'-2k)+1)/k'$ or 
equivalently $k'\geq (2k(d-2k)+1)/(p-2k)$, then $A'X'_I=AX_I$ with probability one. Meanwhile, since $|I|\geq k'\geq k+1$, $A'X'_I=AX_I$ necessitates that 
\begin{equation}
span\left\{ A_S\right\} =span\left\{ A'_{S'}\right\}
\label{eq:map2}
\end{equation}
Finally, since $A$ satisfies the spark condition, it is not difficult to see that $\pi(S)=S'$ is a bijective map. To explain more, assume there exists some $S''\neq S$ such that
$$span\left\{ A_{S''}\right\} =span\left\{ A'_{S'}\right\}$$
Combining with (\ref{eq:map2}) we arrive at
$$span\left\{ A_S\right\} =span\left\{ A_{S''}\right\}\mbox{,}$$
which contradicts the spark condition for $A$ for $S''\neq S$. Therefore, $\pi$ must be injective. Now, since $\mathcal{S}$ is a finite set and $\pi$ is an injective mapping from $\mathcal{S}$ to itself, it must also be surjective and, thus, bijective. 
\end{proof}

In order to have at least $\ell$ columns in $X$ for each support $S\in\mathcal{S}$ in the random sparse coding model 
$\mathcal{Y}^m_k$, we must have more than just $n=\ell|\mathcal{S}|$ data samples. The following result from \cite{r4} quantifies the number of required data samples to ensure at least $\ell$ columns per each $S\in\mathcal{S}$ with a tunable probability of success.

\begin{lemma}[\cite{r4}, \S IV]
For a randomly generated $X$ with $n=\ell{m\choose k}$ and $\beta\in [0,1]$, with probability at least $1-\beta$, there are at least $\beta \ell$ columns for each support pattern $S\in\mathcal{S}$.
\label{lem:beta}
\end{lemma}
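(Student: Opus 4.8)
The plan is to treat the support assignment of the $n=\ell{m\choose k}$ columns as a balls-into-bins process and to control the least-populated bin by a union bound over lower-tail deviations. Under the sparse coding model each column of $X$ draws its support $S\in\mathcal{S}$ independently and uniformly, so if $N_S$ denotes the number of columns of $X$ whose support is $S$, then $N_S\sim\mathrm{Bin}(n,1/|\mathcal{S}|)$ with $\mathbb{E}[N_S]=n/|\mathcal{S}|=\ell$ for every $S$, using $n=\ell|\mathcal{S}|$ and $|\mathcal{S}|={m\choose k}$. The event that the lemma fails is exactly $\bigcup_{S\in\mathcal{S}}\{N_S<\beta\ell\}$: at least one support pattern is under-represented relative to the target $\beta\ell=\beta\,\mathbb{E}[N_S]$, which is the fraction $\beta$ of the common mean.

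First I would bound the failure probability by the first moment of the number of under-represented bins, which coincides with a union bound over $\mathcal{S}$:
$$\Pr\Big[\exists\,S:\,N_S<\beta\ell\Big]\;\le\;\sum_{S\in\mathcal{S}}\Pr[N_S<\beta\ell]\;=\;|\mathcal{S}|\,\Pr[N_1<\beta\ell],$$
where $N_1$ is the count for any fixed pattern and the last equality uses the symmetry of the model across supports. It then suffices to show the single-bin lower tail obeys $\Pr[N_1<\beta\ell]\le\beta/|\mathcal{S}|$. Because $\beta\ell$ is a multiplicative deviation of fraction $1-\beta$ below the mean $\ell$, I would estimate this with a Chernoff bound for the left tail of a binomial, giving decay exponential in $\ell$ with rate governed by $1-\beta$; a Chebyshev bound using $\mathrm{Var}(N_1)=\ell(1-1/|\mathcal{S}|)\le\ell$ is a cruder alternative that decays only like $1/\ell$.

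Combining the single-bin tail estimate with the union bound closes the argument: once $\Pr[N_1<\beta\ell]\le\beta/|\mathcal{S}|$, the total failure probability is at most $\beta$, which is the claimed success probability $1-\beta$, and on the complementary event every $N_S\ge\beta\ell$ simultaneously. The main obstacle is precisely this last inequality. Since $|\mathcal{S}|={m\choose k}$ is large, the union bound only closes when the per-bin expected count $\ell$ is large enough relative to $\log|\mathcal{S}|$, as indeed holds in the regimes of Theorems \ref{lem:Gell-2k} and \ref{cor:Gell-4k}, where $\ell$ scales with $d$ and $k$. I would therefore state the Chernoff exponent explicitly and verify that $|\mathcal{S}|\exp\!\big(-c(1-\beta)^2\ell\big)\le\beta$ in the operating regime rather than asserting it for every value of $\ell$; this is the step where the interplay among $\ell$, $\beta$, and $|\mathcal{S}|$ genuinely enters. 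It is also worth noting why no shortcut exists: a naive Markov bound applied to $\ell-N_S$ fails, since that quantity has mean zero and is not one-sided, so a genuine lower-tail concentration estimate is unavoidable.
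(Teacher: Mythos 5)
You have correctly identified the only serious issue here, and it is worth making explicit that it is not a defect of your write-up but of the statement itself. First, on comparison: the paper contains no proof of Lemma \ref{lem:beta} at all --- it is imported from \cite{r4} by citation --- so there is no internal argument to match against. Your route (binomial marginals $N_S\sim\mathrm{Bin}(n,1/\binom{m}{k})$ with mean $\ell$, a union bound over $\mathcal{S}$, and a lower-tail estimate for a single bin) is the standard and essentially the only way to prove a statement of this shape, and your side remarks are accurate: Chebyshev with $\mathrm{Var}(N_S)\le\ell$ decays only like $1/\ell$ and cannot close the union bound over $\binom{m}{k}$ bins, while a naive Markov bound on $\ell-N_S$ is unavailable since that variable is signed with mean zero.

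Second, the step you decline to assert unconditionally --- $\Pr[N_1<\beta\ell]\le\beta/\binom{m}{k}$ --- is in fact unprovable, because the lemma as stated is false for small $\ell$ under the paper's own sparse coding model (i.i.d.\ uniform supports). Take $m=3$, $k=1$, $\ell=1$, $n=3$, $\beta=1/2$: the claim asserts that all three support patterns occur at least once (i.e.\ $N_S\ge 1/2$) with probability at least $1/2$, but that probability is $3!/3^3=2/9$. More generally, with $\ell=1$ and $N=\binom{m}{k}$ bins the claim requires full coverage of all $N$ patterns by $n=N$ draws with probability $\ge 1-\beta$, whereas that probability is $N!/N^N\to 0$. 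Hence any complete proof must carry a side condition, and your Chernoff version is the correct repair: $\Pr[N_1\le\beta\ell]\le \exp\bigl(-\ell(1-\beta)^2/2\bigr)$, so the conclusion holds once $\ell\ge \frac{2}{(1-\beta)^2}\log\bigl(\binom{m}{k}/\beta\bigr)$. Your parenthetical that this condition ``indeed holds'' in the operating regimes is right for Corollary \ref{cor:Gn-2k}, where the per-pattern count scales like $\binom{m}{k}\gg\log\binom{m}{k}$, but is too quick for the hybrid setting: in the remark following Theorem \ref{cor:Gell-4k}, when $p$ is large the bound permits $\ell=O(k/\beta)$, which need not dominate $\log\binom{m}{k}\approx k\log(m/k)$, so there the condition $\ell\gtrsim (1-\beta)^{-2}k\log(m/k)$ is a genuine additional requirement that would have to be added to the statement. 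With that condition stated explicitly, your argument is complete and, as a bonus, it documents that the lemma cannot be cited in the parameter-free form given in the paper.
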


\begin{proof}[Proof of Corollary \ref{cor:Gn-2k}]
Proof is fairly trivial. According to Lemma \ref{lem:beta}, we need 
$n\geq \frac{\ell}{\beta}{m\choose k}$ samples to guarantee that with probability at least $1-\beta$ there are at least $\ell$ samples in $X$ for each support pattern $S\in\mathcal{S}$. In Theorem \ref{lem:Gell-2k} we established that 
$\ell\geq \frac{2k(d-2k)+1}{p-2k} {m\choose k}$ guarantees the success of BCS under Gaussian sampling. 
Therefore, 
$n\geq \frac{2k(d-2k)+1}{\beta(p-2k)} {m\choose k}^2$
guarantees the desired uniqueness.
\end{proof}

In order to prove the results for the hybrid measurement scheme, we present the following lemma which is proved in the Appendix.

\begin{lemma}
Assume $F\in\mathbb{R}^{p_f\times d}$ is drawn from an i.i.d. zero-mean Gaussian distribution (with $p_f\leq d$). Let $Y_J\in\mathbb{R}^{d\times |J|}$ denote the columns of $Y$ indexed by the set $J$. If $rank(F Y_J)=k<p_f$, then $rank(Y_J)=k$ with probability one. 
\label{lem:rank-check}
\end{lemma}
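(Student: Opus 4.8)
The plan is to show that left-multiplication by a Gaussian matrix preserves the rank of a fixed matrix up to the obvious dimension constraint, so that observing $rank(FY_J)=k<p_f$ forces $rank(Y_J)=k$. First I would record the trivial inequality $rank(FY_J)\le rank(Y_J)$, which holds for every realization of $F$ since multiplication cannot increase rank; this already gives $k\le rank(Y_J)$. The whole content of the lemma therefore lies in the reverse inequality, which I would obtain from the following claim: if $Y_J$ is a fixed (deterministic) matrix of rank $r$, then $rank(FY_J)=\min(p_f,r)$ with probability one.

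To prove the claim I would factor $Y_J=UB$, where $U\in\mathbb{R}^{d\times r}$ has orthonormal columns spanning the column space of $Y_J$ and $B\in\mathbb{R}^{r\times|J|}$ has full row rank $r$. Since $B$ has full row rank, $FY_J=(FU)B$ has the same column space as $FU$, and hence $rank(FY_J)=rank(FU)$. Now $FU$ is a $p_f\times r$ matrix, and because $U$ has orthonormal columns ($U^TU=I_r$), the orthogonal invariance of the i.i.d. zero-mean Gaussian law implies that $FU$ again has i.i.d. zero-mean Gaussian entries: each row $f_i$ of $F$ maps to $U^Tf_i$, whose covariance is a scalar multiple of $U^TU=I_r$. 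A $p_f\times r$ Gaussian matrix attains its maximal possible rank $\min(p_f,r)$ with probability one, because the set of rank-deficient matrices is the common zero set of the $\min(p_f,r)\times\min(p_f,r)$ minors, each of which is a polynomial in the entries that is not identically zero and therefore vanishes only on a Lebesgue-null set. This establishes the claim.

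Finally I would combine the two parts. Writing $r=rank(Y_J)$, the claim guarantees that off a measure-zero set of $F$ we have $rank(FY_J)=\min(p_f,r)$. On this full-measure event, the hypothesis $rank(FY_J)=k<p_f$ forces $\min(p_f,r)=k<p_f$, which is possible only if $r=k$ (if $r\ge p_f$ the minimum would equal $p_f>k$, and if $r<p_f$ the minimum equals $r$). Hence $rank(Y_J)=k$ with probability one, as desired.

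The main obstacle is the claim, and within it the single genuinely probabilistic step: arguing that $FU$ retains full rank $\min(p_f,r)$ almost surely. The reduction via the orthonormal factorization $Y_J=UB$ is what makes this tractable, since it turns a statement about the arbitrary fixed subspace $\mathrm{colspace}(Y_J)$ into a statement about a plain $p_f\times r$ Gaussian matrix, where rotational invariance removes any dependence on the particular subspace and the minor/measure-zero argument applies directly. I expect the only subtlety to be the careful handling of the "with probability one" qualifier—namely that the implication is interpreted as holding for all $F$ outside a null set, rather than as a conditional probability.
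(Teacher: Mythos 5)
Your proposal is correct and follows essentially the same route as the paper: factor $Y_J=UV$ with $U$ having orthonormal columns, observe that $rank(FY_J)=rank(FU)$ since $V$ has full row rank, use orthogonal invariance to see that $FU$ is an i.i.d.\ Gaussian $p_f\times r$ matrix and hence almost surely of rank $\min(p_f,r)$, and conclude $r=k$ from $k<p_f$. The only difference is that you prove the almost-sure full-rank step via the vanishing-minors argument while the paper cites a reference for it.
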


\begin{proof}[Proof of Theorem \ref{cor:Gell-4k}]
Assume $A'X'$ is an alternate factorization that satisfies 
$A'X'\in\mathcal{Y}^m_k$, $\mathcal{M}^{p_v}_G(A'X')=\mathcal{M}^{p_v}_G(AX)$ and $FA'X'=FAX$. Also assume $p_f=k+1$ and $p_v=p-k-1$.
Consider a particular support pattern $S'\in\mathcal{S}$ and let $J'(S')\subset [n]$ denote the set of indices of $X'$'s columns that have the same sparsity pattern $S'$. 

Clearly, 
$$rank\left(FA'X'_{J'(S')}\right)\leq rank\left(A'X'_{J'(S')}\right)=k$$
Therefore, if $p_f \geq k+1$, then 
$p_f > rank\left(FA'X'_{J'(S')}\right)$ and according to Lemma \ref{lem:rank-check}:
$$rank\left(FA'X'_{J'(S')}\right)=rank\left(A'X'_{J'(S')}\right)=k$$ 
with probability one. Hence, $rank\left(FAX_{J'(S')}\right)=k$.
Again using Lemma \ref{lem:rank-check} with $p_f \geq k+1$,  
$$rank\left(AX_{J'(S')}\right)=rank\left(FAX_{J'(S')}\right)=k$$
with probability one. Therefore, all the columns in $X_{J'(S')}$ must have the same support, namely $S$. Note that since $J'(S')\subseteq J(S)$, $|J'(S')|\leq |J(S)|=\ell$. Meanwhile, 
$$\sum_{S'\in\mathcal{S}}|J'(S')|=\ell{m\choose k}$$
necessitates that $|J'(S')|= \ell$ for every $S'\in\mathcal{S}$. Therefore, 
$|J(S)\cap J'(S')|=|I|=\ell$. Now, given
$$\mathcal{M}^{p_v}_G(A'X'_I)=\mathcal{M}^{p_v}_G(AX_I)$$
according to Lemma \ref{lem:manifold}, if $\ell\geq (2k(d-2k)+1)/(p_v-2k)$, then $A'X'_I=AX_I$ with probability one. Meanwhile, since $|I|=\ell \geq k+1$, $A'X'_I=AX_I$ necessitates that 
\begin{equation}
span\left\{ A_S\right\} =span\left\{ A'_{S'}\right\}
\label{eq:map}
\end{equation}
Finally, since $A$ satisfies the spark condition, $\pi(S)=S'$ is a bijective map and $A'=APD$ for some diagonal $D$ and permutation matrix $P$ according to Lemma \ref{lem:APD}. 
\end{proof}

\begin{proof}[Proof of Theorem \ref{the:latest}]
Recall that for every $S\in\hat{\mathcal{S}}$ we have
$|J(S)|\geq \gamma \geq k+1$. Assume $p_f=k+1$ and $p_v=p-k-1$ as before. Having $p_f\geq k+1$ allows testing whether a subset of $k+1$ columns of $Y$ are linearly dependent (have a rank of $k$) with probability one. Therefore, by doing an exhaustive search among every 
sub-matrix $Y_J$ with $J\in{[n]\choose k+1}$, we are able to find subsets of $J(S)$ (of size $k+1$) if $|J(S)|\geq k+1$. Moreover, we can combine and complete these subsets to uniquely identify every rank-$k$ sub-matrix $Y_{J(S)}$ with $|J(S)|\geq k+1$. 

Now, among these sub-matrices, those with $|J(S)|\geq \gamma$ can be recovered perfectly (with probability one) since, for any rank-$k$ matrices $Y_{J(S)}$ and $\hat{Y}_{J(S)}$,
$$\mathcal{M}^{p_v}_G(Y_{J(S)})=\mathcal{M}^{p_v}_G(\hat{Y}_{J(S)})$$
with 
$$p_v\geq (2k(d+|J(S)|-2k)+1)/|J(S)|$$ 
or 
$|J(S)|\geq \frac{2k(d-2k)+1}{p_v-2k}$
implies 
$Y_{J(S)}=\hat{Y}_{J(S)}$ according to Lemma \ref{lem:manifold}.
\end{proof}

\section{Algorithmic Performance of BCS under Hybrid Measurements}
\label{sec:algorithm}

Recall that in the dictionary learning (DL) problem, the data matrix $Y\in\mathbb{R}^{d\times n}$ is given where $Y=A^* X^*\in\mathcal{Y}^m_k$ and the task is to factorize $Y=AX\in\mathcal{Y}^m_k$ such that $A=A^* PD$ for some permutation matrix $P$ and diagonal matrix $D$. Unfortunately, the corresponding optimization problem is non-convex (even with $\ell_1$ relaxation). The majority of existing DL algorithms are based on the iterative scheme of starting from an initial state $Y=A^{(0)}X^{(0)}$ and alternating between updating $X^{(t+1)}$ while keeping $A^{(t)}$ fixed and updating $A^{(t+1)}$ while keeping $X^{(t+1)}$ fixed, each corresponding to a convex problem. It has been recently shown that if the initial dictionary $A^{(0)}$ is sufficiently close\footnote{The \textit{basin of attraction} has a swath of $O(k^{-2})$ \cite{r15}.} to $A^* PD$ for some $P$ and $D$, then the iterative algorithm converges to $A^* PD$ under certain incoherency assumptions about $A^*$ \cite{r15}. Similar guarantees have been derived for the well-known K-SVD algorithm 
\cite{r18}.  

Furthermore, DL from incomplete or corrupt data has also been tackled in several studies. In particular, DL from compressive measurements has been addressed in \cite{r6,r7,r8,r9} where different iterative DL algorithms are modified to accommodate the compressive measurements. In some cases, these modifications have been justified by showing that the output of each iteration does not significantly deviate from the reference output based on the complete data. However, to best of our knowledge, there are no convergence guarantees to $A^* PD$ for these iterative algorithms. As we mentioned before, a successful DL from compressive measurements is a sufficient condition for a successful BCS. In this section, we plan to investigate the utility of a recently proposed (non-iterative) DL algorithm \cite{r12} with guarantees for the approximate recovery of $A^* PD$ for an incoherent $A^*$. One would hope that $A^* PD$ can be approximated from $Y$ with fewer data samples than is required for the exact identification of $A^* PD$ which was the topic of previous sections.

Below, we review the main result of \cite{r12} and analyze the performance of their DL algorithm if only hybrid Gaussian measurements were available. Recall that in our BCS measurement scheme, $p_f$ fixed and $p_v$ varying linear measurements are taken from each sample for a total of $p=p_f+p_v$ linear measurements (per sample). Before presenting their result, we need to introduce some new notation as well as modifications to the sparse coding model to reflect the model used in \cite{r12}. In particular, let $\mathcal{X}\in\mathbb{R}^m$ denote the random vector of sparse coefficients where its distribution class $\Gamma$ is defined below. Hence, each $x_j$ denotes an outcome of $\mathcal{X}$. Also, let $\mathcal{X}_i$ denote the random variable associated with the $i$'th entry of $\mathcal{X}$. 

\begin{definition}
(Distribution class $\Gamma$)
The distribution is in class $\Gamma$ if $i$) 
$\forall \mathcal{X}_i\neq 0\colon \mathcal{X}_i\in[-C,-1]\cup [1,C]$ and 
$\mathbb{E}[\mathcal{X}_i]=0$. $ii$) Conditioned on any subset of coordinates in $\mathcal{X}$ being non-zero, the values of $\mathcal{X}_i$ are independent of each other. Distribution has \textit{bounded $\ell$-wise moments} if the probability that $\mathcal{X}$ is non-zero in any subset $S$ of $\ell$ coordinates is at most $c^\ell$ times 
$\prod_{i\in S} \mathbb{P}[X_i\neq 0]$ where $c=O(1)$.
\end{definition}

\begin{remark}
Similar to \cite{r12}, in the rest of paper we will assume $C=1$. Derived results generalize to the case $C>1$ by loosing constant factors in guarantees.
\end{remark}

\begin{definition}
Two dictionaries $A,B\in\mathbb{R}^{d\times m}$ are column-wise $\epsilon$-close, if there exists a permutation $\pi$ and $\theta\in\{\pm 1\}^m$ such that $\forall i\in[m]\colon \|a_i-\theta_i b_{\pi(i)}\|_2\leq \epsilon$.
\end{definition}

\begin{remark}
When talking about two dictionaries $A$ and $B$ that are $\epsilon$-close, we always assume the columns are ordered and scaled correctly so that $\|a_i-b_i\|_2\leq \epsilon$.
\end{remark}

\begin{theorem}[\cite{r12}, Theorem 1.4]
There is a polynomial time algorithm to learn a $\mu$-coherent dictionary $A$ from random samples. With high probability, the algorithm returns a dictionary $\hat{A}$ that is column-wise $\epsilon$-close to $A$ given random samples of the form $\mathcal{Y}=A\mathcal{X}$, where $\mathcal{X}$ is drawn from a distribution in class $\Gamma$. Specifically, if $k\leq c\min(m^{(\ell-1)/(2\ell-1)},1/(\mu\log d))$ and the distribution has bounded $\ell$-wise moments, $c>0$ is a constant only depending on $\ell$, then the algorithm requires 
$n=\Omega ((m/k)^{\ell-1}\log m+m k^2\log m \log 1/\epsilon)$ samples and runs in time $\tilde{O}(n^2d)$.
\label{the:arora}
\end{theorem}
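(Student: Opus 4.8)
The plan, following the analysis underlying \cite{r12}, is to recover the dictionary one atom at a time: first I would \emph{cluster} the samples according to which dictionary element their support contains, and then extract each atom from its cluster by a spectral step. The two additive terms in the stated sample complexity correspond to these two phases.

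First I would use incoherence to turn pairwise correlations into a test for shared support. Writing $y_i=Ax_i$ and $y_j=Ax_j$ with supports $S_i,S_j\in\binom{[m]}{k}$, the inner product $\langle y_i,y_j\rangle=\sum_{p\in S_i,\,q\in S_j} x_{i,p}x_{j,q}\langle a_p,a_q\rangle$ separates into a diagonal contribution from indices $p=q\in S_i\cap S_j$ and an off-diagonal remainder whose magnitude is at most $\mu$ times the coefficient mass. With $C=1$ the nonzero coefficients are $\pm1$, and the hypothesis $k\le c/(\mu\log d)$ makes the off-diagonal remainder $o(1)$ with high probability, so $|\langle y_i,y_j\rangle|$ is bounded away from zero exactly when $S_i\cap S_j\neq\emptyset$. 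Thresholding these correlations over all pairs produces an \emph{overlap graph}; I would then argue, via common-neighbour counts among the samples correlated with a fixed pair, that the samples whose support contains a given atom form an identifiable community. The bounded $\ell$-wise moment assumption enters precisely here: capping the probability that an $\ell$-subset of coordinates is jointly active by $c^\ell\prod_{i\in S}\mathbb{P}[\mathcal{X}_i\neq 0]$ keeps coincidental overlaps rare and forces $n=\Omega((m/k)^{\ell-1}\log m)$ so that every relevant tuple of atoms is witnessed often enough to resolve the communities.

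Next, given the cluster $C_i$ of samples whose support contains atom $i$, I would recover $a_i$ spectrally from the empirical second-moment matrix $\frac{1}{|C_i|}\sum_{j\in C_i} y_j y_j^{T}$. Because the signs are independent and mean-zero, its expectation is a dominant rank-one term $a_i a_i^{T}$ plus a residual assembled from the other atoms active on $C_i$; incoherence bounds the operator norm of that residual, so the top eigenvector of the empirical matrix is close to $a_i$ by a Davis--Kahan argument, once a matrix Bernstein bound controls the deviation of the empirical matrix from its expectation. Driving this deviation below $\epsilon$ is what produces the second sample-complexity term $mk^2\log m\,\log(1/\epsilon)$, and reconciling the signs and a global permutation across the recovered $\hat a_i$ yields a dictionary that is column-wise $\epsilon$-close to $A$. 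The running time is dominated by forming all $\binom{n}{2}$ pairwise correlations, at cost $O(n^2 d)$, with the graph and spectral steps adding lower-order factors, giving the claimed $\tilde{O}(n^2 d)$.

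The hard part will be the clustering phase rather than the spectral recovery. The correlation test must succeed \emph{simultaneously} over all $\binom{n}{2}$ pairs, and overlapping communities---a single atom co-occurring with many different atoms---must be disentangled without merging distinct communities. This is exactly where the second constraint $k\le c\,m^{(\ell-1)/(2\ell-1)}$ is needed: it keeps higher-order coincidental overlaps rare enough that the community structure of the overlap graph is identifiable, and balancing this combinatorial constraint against the incoherence constraint $k\le c/(\mu\log d)$ is the delicate point of the whole argument.
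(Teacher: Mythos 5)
This theorem is quoted verbatim from \cite{r12} (Theorem 1.4); the paper supplies no proof of it, only a two-stage summary of the algorithm (overlap-graph clustering of the samples, then recovery of each atom as the principal eigenvector of its cluster's empirical covariance matrix). Your sketch reproduces exactly that structure and correctly locates where the incoherence bound, the bounded $\ell$-wise moments, the two sample-complexity terms, and the $\tilde{O}(n^2d)$ running time enter, so it is consistent with everything the paper asserts; the genuinely hard step you yourself flag --- proving that the overlapping communities of the correlation graph can be disentangled, which is where $k\leq c\,m^{(\ell-1)/(2\ell-1)}$ is used --- is precisely the part that neither you nor this paper carries out, and for it one must defer to \cite{r12}.
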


\begin{definition}[Summary of the algorithm of \cite{r12}] 
This algorithm, which has fundamental similarities with a concurrent work \cite{r13}, consists of two main stages:
$i$) \textit{Data Clustering}: the connection graph is built where each node corresponds to a column of $Y$ and an edge between $y_i$ and $y_j$ implies their supports $S_i$ and $S_j$ have a non-empty intersection. Then, an overlapping clustering procedure is performed over the connection graph to find overlapping maximal cliques (with missing edges). $ii$) \textit{Dictionary Recovery}: every cluster in the connection graph represents the set of samples associated with a single dictionary atom. After finding these clusters in the connection graph, each atom is approximated by the principal eigenvector of the covariance matrix for the data samples in its corresponding cluster.
\end{definition}

There are two challenges in extending the above result to the BCS framework: $i$) during generation of the connection graph from data and $ii$) during computation of the principal eigenvector of the data covariance matrix. We address these challenges separately in the following subsections.

\subsection{Building the data connection graph for BCS}

For building the connection graph, we use the fixed part of the hybrid measurements, i.e. $FY$ with $F\in\mathbb{R}^{p_f\times d}$ drawn from a Gaussian distribution. Computation of the connection graph in \cite{r12} relies on the following lemma.

\begin{lemma}[\cite{r12}, Lemma 2.2]
Suppose $k<1/(C'\mu \log d)$ for large enough $C'$ (depending on $C$ in the definition of $\Gamma$). Then, if $S_i$ and $S_j$ are disjoint, with high probability $|\langle y_i, y_j \rangle|<1/2$.
\end{lemma}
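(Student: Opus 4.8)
The plan is to express the inner product as a bilinear form in the random coefficients and control it by concentration rather than by a worst-case bound. Writing $y_i=Ax_i$, $y_j=Ax_j$ and collecting the nonzero coefficients into $u=(x_{i,r})_{r\in S_i}$ and $v=(x_{j,s})_{s\in S_j}$, we have
$$\langle y_i,y_j\rangle=x_i^T A^T A\,x_j=\sum_{r\in S_i}\sum_{s\in S_j}x_{i,r}\,x_{j,s}\,\langle a_r,a_s\rangle=u^T M v,$$
where $M$ is the $S_i\times S_j$ block of the Gram matrix $A^TA$. Since $S_i\cap S_j=\emptyset$, every entry of $M$ is an off-diagonal Gram entry, so for unit-norm $\mu$-coherent atoms $|M_{rs}|\le\mu$; this yields $\|M\|_F\le k\mu$ and, via $\|\cdot\|_{\mathrm{op}}\le\|\cdot\|_F$ (or Gershgorin on the symmetric dilation $\bigl[\begin{smallmatrix}0&M\\ M^T&0\end{smallmatrix}\bigr]$), $\|M\|_{\mathrm{op}}\le k\mu$.

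I would then condition on the two (disjoint) supports and use the structure of class $\Gamma$. With $C=1$, property (i) forces each nonzero coefficient to equal $\pm1$, property (ii) makes the entries within a column independent, and $u,v$ come from different columns and are hence mutually independent, so $\mathbb{E}[u^T M v]=0$. Thus $u^T M v$ is a decoupled bilinear Rademacher chaos, and I would bound it with a decoupled Hanson--Wright inequality
$$\mathbb{P}\!\left[\,|u^T M v|>t\,\right]\le 2\exp\!\left(-c\,\min\!\Big(\tfrac{t^2}{\|M\|_F^{2}},\,\tfrac{t}{\|M\|_{\mathrm{op}}}\Big)\right).$$
Taking $t=\tfrac12$ and inserting $\|M\|_F,\|M\|_{\mathrm{op}}\le k\mu<1/(C'\log d)$, the smaller (sub-exponential) term is $\tfrac{t}{\|M\|_{\mathrm{op}}}\ge\Omega(1/(k\mu))=\Omega(C'\log d)$, so the failure probability is at most $d^{-\Omega(C')}$, which is the asserted high probability once $C'$ is chosen large.

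The step I expect to be delicate is securing the right exponent, namely order $1/(k\mu)$ rather than the much weaker order $1/(k^3\mu^2)$ that a naive argument produces. A purely deterministic estimate gives only $|u^T M v|\le k^2\mu$, and the hypothesis $k\mu<1/(C'\log d)$ does not drive $k^2\mu=k\cdot(k\mu)$ below $\tfrac12$ unless $k$ is tiny, so the randomness must genuinely be exploited. Conditioning on $v$ and applying a plain Hoeffding bound to the sign sum $u^T(Mv)$ would force the worst-case estimate $\|Mv\|_2\le\sqrt{k}\,\|M\|_{\mathrm{op}}\le k^{3/2}\mu$, squandering a factor $\sqrt{k}$ and leaving an exponent of only $\Omega(1/(k^3\mu^2))$. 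The remedy is to use the \emph{typical} Frobenius size of $Mv$ instead of its worst case: either invoke the bilinear Hanson--Wright bound directly, or argue in two stages---first show $\|A_{S_j}^T y_i\|_2^2=\|M^T u\|_2^2$ concentrates about its mean $\|M\|_F^2\le k^2\mu^2$ (a quadratic Hanson--Wright estimate in $u$), then apply Hoeffding over $v$ on that event. The essential point, in either route, is that at the threshold $t=\tfrac12$ the operator-norm term controls the tail under $k\mu\ll1$.
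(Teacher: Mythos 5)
This lemma is imported verbatim from \cite{r12} and the paper gives no proof of it, so there is no in-paper argument to compare against; I can only assess your proof on its own terms, and it is sound. Conditioned on the two disjoint supports, $u$ and $v$ are independent vectors of independent zero-mean entries bounded by $C=1$ (Rademacher, by property (i) and (ii) of class $\Gamma$), $M$ is a $k\times k$ block of off-diagonal Gram entries so $\|M\|_{\mathrm{op}}\le\|M\|_F\le k\mu$, and the decoupled Hanson--Wright bound at $t=1/2$ gives a tail of $\exp\bigl(-\Omega(\min(1/(k\mu)^2,\,1/(k\mu)))\bigr)=d^{-\Omega(C')}$ under $k\mu<1/(C'\log d)$, which is what is claimed (and is small enough to union-bound over all pairs, since $n$ is polynomial in $d$ in the regime of Theorem \ref{the:arora}). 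Your diagnosis of the delicate point is also exactly right: the deterministic bound $k^2\mu$ and the na\"{\i}ve conditional Hoeffding bound $k^{3/2}\mu$ both fail under the stated hypothesis, so the randomness of \emph{both} coefficient vectors must be used. For what it is worth, the source \cite{r12} takes your ``two-stage'' fallback rather than invoking Hanson--Wright: it first shows each $|\langle a_s,y_i\rangle|$ is $O(\sqrt{k}\,\mu\sqrt{\log d})$ with high probability by Hoeffding over $u$, so that $\|M^Tu\|_2=O(k\mu\sqrt{\log d})$ rather than the worst-case $k^{3/2}\mu$, and then applies Hoeffding over $v$ to get $|u^TMv|=O(k\mu\log d)=O(1/C')<1/2$. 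The two routes buy the same thing; the chaos inequality packages the two conditionings into one step and makes the role of $\|M\|_F$ versus $\|M\|_{\mathrm{op}}$ explicit, while the double-Hoeffding argument is more elementary and is the one actually used in \cite{r12}.
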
 

Without going into the details of the clustering algorithm of \cite{r12}, we study the conditions under which the connection graph does not change when only $p_f$ linear measurements from each data sample is given. Let
$FA\in\mathbb{R}^{p_f\times m}$ be $\mu_f$-coherent. It is not hard to see from the above lemma that if $k<1/(C'\mu_f \log d)$, then with high probability for disjoint $S_i$ and $S_j$, $|\langle F y_i,F y_j \rangle|<1/2$. To establish a relationship between $\mu_f$, $\mu$ and $p_f$, we use the following result from \cite{r22}.

\begin{lemma}[\cite{r22}, Lemma 3.1]
Let $x,y\in\mathbb{R}^d$ with $\|x\|_2,\|y\|_2\leq 1$. Assume $\Phi\in\mathbb{R}^{n\times d}$ is a random matrix with independent $\mathcal{N}(0,1/n)$ entries. Then, for all $t>0$
$$\mathbb{P}[|\langle \Phi x,\Phi y\rangle-\langle x,y\rangle|\geq t]\leq 2\exp (-n\frac{t^2}{C_1+C_2t})$$
with $C_1=\frac{8e}{\sqrt{6\pi}}\approx 5.0088$ and $C_2=\sqrt{8}e\approx 7.6885$.
\label{lem:holger}
\end{lemma}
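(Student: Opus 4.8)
The plan is to reduce the statement to a Bernstein-type concentration inequality for an independent sum and then to track the numerical constants carefully. First I would absorb the variance normalization by writing the rows of $\Phi$ as $\tfrac{1}{\sqrt{n}}g_i^T$ with $g_i\sim\mathcal{N}(0,I_d)$ independent, so that
$$\langle \Phi x,\Phi y\rangle=\frac{1}{n}\sum_{i=1}^n (g_i^T x)(g_i^T y).$$
Each summand $W_i:=(g_i^T x)(g_i^T y)$ satisfies $\mathbb{E}[W_i]=x^T\mathbb{E}[g_ig_i^T]y=\langle x,y\rangle$, so $\langle\Phi x,\Phi y\rangle-\langle x,y\rangle$ is the average of the i.i.d.\ centered variables $Z_i:=W_i-\langle x,y\rangle$. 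The whole problem thus becomes a two-sided tail bound for $\tfrac{1}{n}\sum_i Z_i$.

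Next I would show each $Z_i$ is sub-exponential with explicitly controlled parameters. Writing $W_i=g_i^T A\,g_i$ with the symmetric matrix $A=\tfrac12(xy^T+yx^T)$, whose nonzero eigenvalues $\lambda_\pm$ obey $2\lambda_\pm=\nu_\pm:=\langle x,y\rangle\pm\|x\|_2\|y\|_2$, the moment generating function is available in closed form,
$$\mathbb{E}\!\left[e^{sW_i}\right]=\det(I-2sA)^{-1/2}=\big[(1-s\nu_+)(1-s\nu_-)\big]^{-1/2},$$
valid for $0\le s<1/\nu_+$. Because $\|x\|_2,\|y\|_2\le 1$, Cauchy--Schwarz gives $\nu_+\ge 0\ge\nu_-$ and $|\nu_\pm|\le 2$, which is exactly where the hypotheses on the norms enter. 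Centering and using $\tfrac12(\nu_++\nu_-)=\langle x,y\rangle$, the log-MGF collapses to $-\tfrac12\sum_{\pm}\big[s\nu_\pm+\log(1-s\nu_\pm)\big]$; applying the elementary bound $-u-\log(1-u)\le \tfrac{u^2}{2(1-u)}$ (and its one-sided sharpening for $u\le 0$) then yields a Bernstein form $\log\mathbb{E}[e^{sZ_i}]\le \tfrac{s^2\sigma^2/2}{1-b s}$ for suitable $\sigma^2,b=O(1)$.

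From there the tail bound is standard: by independence the log-MGF of $\sum_i Z_i$ is $n$ times that of a single term, so a Chernoff bound followed by optimization over $s\in(0,1/b)$ gives an estimate of the form $\exp\!\big(-\tfrac{n t^2}{2(\sigma^2+bt)}\big)$ for the upper tail, and the same argument applied to $-Z_i$ handles the lower tail, producing the factor $2$. Matching this generic two-regime estimate to the stated constants $C_1=\tfrac{8e}{\sqrt{6\pi}}$ and $C_2=\sqrt{8}\,e$ is, to my mind, the delicate part of the argument rather than the structure. I expect that obtaining these exact values is cleaner through the equivalent moment route: bounding the absolute moments $\mathbb{E}|Z_i|^q$ of the degree-two Gaussian chaos (where the Gaussian absolute-moment normalizations account for the $\sqrt{6\pi}$ and the $e$) so as to verify the moment-form Bernstein condition $\mathbb{E}|Z_i|^q\le \tfrac{q!}{2}C_1C_2^{\,q-2}$, and then invoking the classical Bernstein inequality verbatim.

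The main obstacle I anticipate is therefore not the overall scheme but the bookkeeping of constants: the correlation between $g_i^T x$ and $g_i^T y$ is encoded in the two eigenvalues $\nu_\pm$, and their worst case is attained at $\|x\|_2=\|y\|_2=1$, so the cap $|\nu_\pm|\le 2$ must be used tightly to produce the $\sqrt{8}$ in $C_2$, while the precise $C_1$ emerges from the sharpest available moment bound for the centered product. Everything else---the reduction to an i.i.d.\ sum, the closed-form MGF, and the Chernoff optimization---is routine.
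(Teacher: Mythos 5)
A point of calibration first: the paper does not prove this statement at all. It is quoted verbatim as Lemma~3.1 of the cited reference (Rauhut--Schnass--Vandergheynst) and used as a black box to obtain the coherence bound in Corollary~\ref{cor:mu}, so there is no in-paper proof to compare against; I can only judge your reconstruction against what such a proof must contain.

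Your plan is structurally correct, and its endgame coincides with how the original source actually establishes the constants. The reduction to the i.i.d.\ sum $\frac1n\sum_i (g_i^Tx)(g_i^Ty)$, the identification of the mean with $\langle x,y\rangle$, the representation $W_i=g_i^TAg_i$ with $A=\tfrac12(xy^T+yx^T)$, the eigenvalues $\tfrac12(\langle x,y\rangle\pm\|x\|_2\|y\|_2)$, and the closed-form MGF are all right, and the hypotheses $\|x\|_2,\|y\|_2\le 1$ enter exactly where you say they do. One concrete caveat: if you carry the MGF/Chernoff route to completion with $|\nu_\pm|\le 2$ and the bound $-u-\log(1-u)\le u^2/(2(1-u))$, you land on a denominator of the form $8+4t$, i.e.\ the pair $(C_1,C_2)=(8,4)$, which is \emph{incomparable} to the stated $(8e/\sqrt{6\pi},\sqrt{8}e)\approx(5.01,7.69)$ --- better for large $t$, worse for small $t$. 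So the Chernoff route proves a correct inequality of the same shape but does not certify the lemma \emph{as stated}; the advertised constants genuinely require the moment-form Bernstein condition $\mathbb{E}|Z_i|^q\le\tfrac{q!}{2}C_1C_2^{q-2}$, verified via Cauchy--Schwarz on $\mathbb{E}|g^Tx|^{2q}$, the Gaussian absolute-moment formula, and Stirling (whence the $\sqrt{6\pi}$ and the powers of $e$). You correctly anticipate this and defer to the moment route, which is the argument the cited reference uses, but that computation is the only nontrivial bookkeeping in the whole proof and is left unexecuted in your proposal. In short: correct approach and correct diagnosis of where the difficulty lies, with the constant-chasing step still to be done.
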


\begin{corollary}
Assume $F\in\mathbb{R}^{p_f\times d}$ has i.i.d entries from $\mathcal{N}(0,1/p_f)$. Let $A$ be $\mu$-coherent and $FA$ be $\mu_f$-coherent. Then,
$$\mathbb{P}[\mu_f\geq \mu+t]\leq 2\exp (-p_f\frac{t^2}{C_1+C_2t})$$
with $C_1$ and $C_2$ specified in Lemma \ref{lem:holger}.
\label{cor:mu}
\end{corollary}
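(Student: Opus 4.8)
The plan is to obtain the corollary as an essentially one-pair application of Lemma \ref{lem:holger} with $\Phi=F$ and $n=p_f$. First I would normalize the columns: both the coherence $\mu$ of $A$ and the coherence $\mu_f$ of $FA$ are invariant under independent rescaling of the columns of $A$, since replacing $a_i$ by $c_i a_i$ replaces $Fa_i$ by $c_i Fa_i$ and leaves every ratio $|\langle Fa_i,Fa_j\rangle|/(\|Fa_i\|_2\|Fa_j\|_2)$ unchanged. Hence I may assume $\|a_i\|_2=1$ for all $i$, so that $\max_{i\neq j}|\langle a_i,a_j\rangle|\leq\mu$ and the hypotheses $\|a_i\|_2,\|a_j\|_2\leq 1$ of Lemma \ref{lem:holger} hold for every pair.

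Next, I would fix a pair $i\neq j$ and apply Lemma \ref{lem:holger} with $x=a_i$ and $y=a_j$, which gives
$$\mathbb{P}\left[\,\big|\langle Fa_i,Fa_j\rangle-\langle a_i,a_j\rangle\big|\geq t\,\right]\leq 2\exp\left(-p_f\frac{t^2}{C_1+C_2 t}\right).$$
On the complementary event the triangle inequality yields $|\langle Fa_i,Fa_j\rangle|\leq|\langle a_i,a_j\rangle|+t\leq\mu+t$, which is exactly the numerator estimate needed for the coherence bound.

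The step I expect to be the main obstacle is the denominator in the definition of $\mu_f$: the coherence is the \emph{normalized} quantity $|\langle Fa_i,Fa_j\rangle|/(\|Fa_i\|_2\|Fa_j\|_2)$, and the norms $\|Fa_i\|_2$ are themselves random rather than exactly $1$. To control them I would invoke Lemma \ref{lem:holger} a second time with $x=y=a_i$, giving concentration of $\|Fa_i\|_2^2$ about $\langle a_i,a_i\rangle=1$; thus $\|Fa_i\|_2\|Fa_j\|_2\geq 1-o(1)$ with the same exponential probability, so the normalized ratio still stays below $\mu+t$ at the stated rate. Equivalently, once the columns are normalized one may adopt the near-isometry convention $\|Fa_i\|_2\approx\|a_i\|_2=1$, under which the numerator bound already delivers the claim. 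Reconciling the two exponents cleanly, so that the prefactor remains $2$ and the exponent remains $t^2/(C_1+C_2 t)$, is the only delicate bookkeeping.

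Finally, I would remark on the quantifier: the event $\{\mu_f\geq\mu+t\}$ is a maximum over all ${m\choose 2}$ off-diagonal pairs, so deriving the bound for the full coherence would in principle require a union bound inflating the prefactor from $2$ to $2{m\choose 2}$. The estimate as stated is the per-pair tail bound, which is precisely what the subsequent incoherence argument (relating $k<1/(C'\mu_f\log d)$ to $\mu$ and $p_f$) relies on.
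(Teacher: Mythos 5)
Your proposal is correct and takes essentially the same route as the paper: normalize the columns (so the variance of $F$'s entries is immaterial to the coherence) and apply Lemma \ref{lem:holger} pairwise with $\Phi=F$, $x=a_i$, $y=\pm a_j$. The two difficulties you flag --- the random denominators $\|Fa_i\|_2\|Fa_j\|_2$ in the definition of $\mu_f$, and the union bound over ${m\choose 2}$ pairs needed to pass from a per-pair tail to the maximum defining $\mu_f$ --- are genuine, and the paper's own one-line proof is silent on both, concluding only via $\mathbb{P}[\mu_f\geq\mu+t]\leq\mathbb{P}[|\mu_f-\mu|\geq t]$; so your reading of the stated inequality as a per-pair bound (or as holding with adjusted constants after the union bound and norm concentration) is the right one.
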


\begin{proof}
Note that the variance of $F$'s entries does not have an effect on $\mu_f$ due to the normalization in the definition of the coherency and we could assume $F$'s entries have variance $1/d$ as before. We exploit Lemma \ref{lem:holger} by replacing $x=a_i$ and $y=\pm a_j$ and $\Phi=F$. Proof is complete by noticing that 
$\mathbb{P}[\mu_f\geq \mu+t]\leq\mathbb{P}[|\mu_f-\mu|\geq t]$
\end{proof}

Based on Corollary \ref{cor:mu}, it can be deduced that with high probability $\mu_f\leq \mu+\sqrt{\log(p_f)/p_f}$. Therefore, replacing $k<c/(\mu \log d)$ in the original Theorem \ref{the:arora} with $k<c/(\mu_f \log d)$ introduces slightly stronger sparsity requirement for the success of the algorithm.

\subsection{Dictionary estimation for BCS}

At this stage, we only exploit the varying part of the measurements $\mathcal{M}^{p_v}_G(Y)$ and use $p$ in place of $p_v$ for simplicity. Let $\mathcal{C}_1,\mathcal{C}_2,\dots,\mathcal{C}_m$ be the $m$ discovered overlapping clusters from the previous stage and define the empirical covariance matrix 
$\hat{\Sigma}_i=\frac{1}{|\mathcal{C}_i|}\sum_{y_j\in \mathcal{C}_i} y_j y_j^T$ for the cluster $i$. The SVD approach\footnote{In fact, \cite{r12} proposes two methods for dictionary estimation: $i$) selective averaging and $ii$) the SVD-based approach. We selected to work with the SVD approach due to its more abstract and versatile nature.} of \cite{r12} estimates $a_i$ by $\hat{a}_i$ which is the principal eigenvector\footnote{The principal eigenvector is equivalent to the first singular vector of the covariance matrix.} of $\hat{\Sigma}_i$. Let 
$$\tilde{\Sigma}_i=\frac{1}{|\mathcal{C}_i|}\sum_{\hat{y}_j\in \mathcal{C}_i} \hat{y}_j \hat{y}_j^T$$
denote the empirical covariance matrix resulting from the compressive measurements where 
$\hat{y}_j=\Phi_j^T(\Phi_j\Phi_j^T)^{-1}\Phi_j y_j$ as before. Similarly, let $\tilde{a}_i$ denote the principal eigenvector of $\tilde{\Sigma}_i$. Our goal in this section is to show that $\|\tilde{a}_i-\hat{a}_i\|_2$ is bounded by a small constant for finite $n$ and approaches zero for large $n$. For this purpose, we use the recent results from the area of \textit{subspace learning}, specifically, subspace learning from compressive measurements \cite{r14.1}. A critical factor in estimation accuracy of the principle eigenvector of a \textit{perturbed} covariance matrix is the \textit{eigengap} between the principal and the second eigenvalues of the original covariance matrix. This is a well-known result from the works of Chandler Davis and William Kahan known as the Davis-Kahan sine theorem \cite{r14.2}. 

Consider the following notation. Let $\hat{\Pi}_k$ and $\tilde{\Pi}_k$ denote projection operators onto the principal $k$-dimensional subspaces of $\hat{\Sigma}$ and $\tilde{\Sigma}$ respectively (i.e. the projection onto the top-$k$ eigenvectors). Let $\|\tilde{\Pi}_k-\hat{\Pi}_k\|_2$ denote the spectral norm of the difference between $\hat{\Pi}_k$ and $\tilde{\Pi}_k$. Define the eigengap $\hat{\gamma}_k$ as the distance between the $k$'th and $k+1$'st largest eigenvalues of $\hat{\Sigma}$. Suppose $\hat{\Sigma}$ is computed from at least $\ell$ data samples ($|\mathcal{C}_i|\geq \ell$ for all $i$). Moreover, assume the data samples have bounded $\ell_2$ norms, i.e. $\forall j\in[\ell]\colon\|y_j\|_2^2\leq \eta$ for some positive $\eta\in\mathbb{R}$. 

\begin{lemma}[\cite{r14.1}, Theorem 1]
With probability at least $1-\delta$
$$\|\hat{\Pi}_k-\tilde{\Pi}_k\|_2 \leq \frac{1}{\hat{\gamma}_k} \left(\sqrt{\frac{88\eta^2}{\ell p}\log (d/\delta)}+
\frac{8}{3}\frac{\eta d^2}{p^2 \ell}\log (d/\delta) \right) $$
so that one can achieve $\|\hat{\Pi}_k-\tilde{\Pi}_k\|_2\leq\epsilon$ provided that
$$\ell \geq \max\left\{\frac{352\eta^2 \log(d/\delta)}{p\hat{\gamma}_k^2\epsilon^2},
\frac{16}{3}\frac{\eta d^2}{\hat{\gamma}_k\epsilon p^2}\log(d/\delta) \right\} $$
\label{lem:akshay}
\end{lemma}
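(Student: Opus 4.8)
The plan is to prove the bound in two stages: a deterministic reduction from subspace perturbation to covariance perturbation via the Davis--Kahan $\sin\Theta$ theorem (cited here as [r14.2]), followed by a matrix concentration argument that controls the covariance perturbation itself. The Davis--Kahan theorem states that the spectral distance between the principal $k$-dimensional eigenprojectors is governed by the perturbation of the matrices relative to the eigengap, namely
$$\|\hat{\Pi}_k - \tilde{\Pi}_k\|_2 \leq \frac{\|\tilde{\Sigma} - \hat{\Sigma}\|_2}{\hat{\gamma}_k}.$$
This already explains the common factor $1/\hat{\gamma}_k$ appearing in front of both terms of the claimed inequality, and it reduces the entire problem to producing a high-probability bound on $\|\tilde{\Sigma} - \hat{\Sigma}\|_2$.

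Next I would express $\tilde{\Sigma}$ as a sum of independent mean-zero matrices. Since each $\Phi_j$ is Gaussian, the operator $\Phi_j^T(\Phi_j\Phi_j^T)^{-1}\Phi_j$ is the orthogonal projector onto a uniformly random $p$-dimensional subspace, whose action as a linear map rescales the identity by $p/d$. After the appropriate rescaling (of order $(d/p)^2$ on the rank-one outer products), each summand $Z_j=\hat{y}_j\hat{y}_j^T$ can be made to satisfy $\mathbb{E}_{\Phi_j}[Z_j]=y_j y_j^T$, so that $\tilde{\Sigma}-\hat{\Sigma}=\frac{1}{\ell}\sum_j (Z_j-\mathbb{E} Z_j)$ is a sum of independent, symmetric, mean-zero matrices. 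I would then invoke the matrix Bernstein inequality, which produces a tail bound of the form $\|\tilde{\Sigma}-\hat{\Sigma}\|_2 \lesssim \sqrt{\sigma^2\log(d/\delta)} + R\log(d/\delta)$, where $R$ bounds $\|Z_j-\mathbb{E} Z_j\|_2$ almost surely and $\sigma^2$ bounds the matrix variance $\big\|\frac{1}{\ell^2}\sum_j \mathbb{E}(Z_j-\mathbb{E} Z_j)^2\big\|_2$. Matching the Bernstein variance term against $\sqrt{88\eta^2/(\ell p)\log(d/\delta)}$ and the boundedness term against $\tfrac{8}{3}(\eta d^2/p^2\ell)\log(d/\delta)$ recovers the stated inequality; the explicit constants $88$ and $8/3$ are the usual signatures of the two Bernstein contributions, with $\eta$ entering through the bound $\|y_j\|_2^2\leq\eta$.

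To obtain the sample-complexity form, I would then require each of the two summands to be at most $\epsilon\hat{\gamma}_k/2$ and solve for $\ell$. Forcing the variance term below this threshold yields $\ell \geq 352\eta^2\log(d/\delta)/(p\hat{\gamma}_k^2\epsilon^2)$, while the boundedness term yields $\ell \geq \tfrac{16}{3}\eta d^2\log(d/\delta)/(\hat{\gamma}_k\epsilon p^2)$, and taking the maximum of the two delivers the claimed condition on $\ell$.

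The hard part will be the moment computations for the random projector. Because the orthogonal projection onto a random $p$-dimensional subspace is not a matrix with i.i.d.\ entries, verifying unbiasedness (and extracting the $d/p$ scaling) through $\mathbb{E}_{\Phi_j}[Z_j]$, and especially computing the second moment $\mathbb{E}_{\Phi_j}[(Z_j-\mathbb{E} Z_j)^2]$ that feeds $\sigma^2$, requires fourth-order integration against the Haar/Wishart law of $\Phi_j$; this is precisely where the $\eta^2/p$ and $\eta d^2/p^2$ scalings originate. A further subtlety is that $\|Z_j\|_2$ is not deterministically bounded, so one must either truncate the projector norm on a high-probability event or use a version of Bernstein that tolerates sub-exponential summands before the clean almost-sure constant $R$ can legitimately be asserted.
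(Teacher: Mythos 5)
There is no in-paper proof to compare against here: the paper imports this statement verbatim as Theorem~1 of \cite{r14.1} and uses it as a black box, so your proposal has to be judged as a reconstruction of the external argument. Your overall strategy --- Davis--Kahan to convert projector distance into covariance perturbation over the eigengap, then matrix Bernstein over the randomness of the $\Phi_j$ with the data held fixed --- is the right one, and it is consistent with the paper's own framing, which explicitly invokes the Davis--Kahan sine theorem \cite{r14.2} just before stating the lemma.

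There is, however, one step in your plan that is wrong as written and would derail the Bernstein argument: the claim that after rescaling by a power of $d/p$ one gets $\mathbb{E}_{\Phi_j}[Z_j]=y_jy_j^T$. Writing $P_j=\Phi_j^T(\Phi_j\Phi_j^T)^{-1}\Phi_j$ for the Haar-random rank-$p$ projector, the second-moment identity for such projectors gives
$$\mathbb{E}\left[P_j\,y_jy_j^T\,P_j\right]=\alpha\, y_jy_j^T+\beta\,\|y_j\|_2^2\, I$$
with $\beta\neq 0$ (both $\alpha$ and $\beta$ depend only on $p$ and $d$). So no scalar rescaling makes $Z_j$ unbiased for $y_jy_j^T$, and the sum $\tilde{\Sigma}-\hat{\Sigma}$ is \emph{not} a sum of mean-zero matrices; Bernstein applied to it would leave an uncontrolled deterministic bias. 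The repair is standard but must be stated: center each $Z_j$ at its true expectation, bound $\|\tilde{\Sigma}-\mathbb{E}_\Phi\tilde{\Sigma}\|_2$ by Bernstein, and then observe that $\mathbb{E}_\Phi\tilde{\Sigma}=\alpha\hat{\Sigma}+\beta\bigl(\tfrac{1}{\ell}\sum_j\|y_j\|_2^2\bigr)I$ has exactly the same top-$k$ eigenprojector as $\hat{\Sigma}$ and eigengap $\alpha\hat{\gamma}_k$, so Davis--Kahan still applies with the gap rescaled by $\alpha\asymp p^2/d^2$ --- which is in fact where the $d^2/p^2$ factor in the second term comes from. Beyond this, your write-up is a plan rather than a proof: the fourth-moment computation feeding $\sigma^2$, the truncation needed because $\|Z_j\|_2$ is not almost surely bounded, and hence the specific constants $88$ and $8/3$, are all deferred to exactly the places you flag as ``the hard part.''
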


Below, we present a customization of Lemma \ref{lem:akshay} for the $\ell_2$ error of the principal eigenvector estimator.

\begin{corollary}
Let $\hat{a}_i$ and $\tilde{a}_i$ represent the principal eigenvectors of $\hat{\Sigma}_i$ and $\tilde{\Sigma}_i$ respectively. With probability at least $1-\delta$ for all $i\in[m]$
$$\|\hat{a}_i-\tilde{a}_i\|_2 \leq \frac{2}{\hat{\gamma}_1} 
\left(\sqrt{\frac{88\eta^2}{\ell p}\log (d/\delta)}+
\frac{8}{3}\frac{\eta d^2}{p^2 \ell}\log (d/\delta) \right) $$
\label{lem:temp}
\end{corollary}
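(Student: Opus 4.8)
The plan is to specialize Lemma \ref{lem:akshay} to the leading eigenvector by taking $k=1$, and then to convert the resulting bound on the spectral distance between the two rank-one projectors into a bound on the Euclidean distance between the corresponding unit eigenvectors. First I would set $k=1$ in Lemma \ref{lem:akshay}. With this choice $\hat{\Pi}_1=\hat{a}_i\hat{a}_i^T$ and $\tilde{\Pi}_1=\tilde{a}_i\tilde{a}_i^T$ are exactly the orthogonal projectors onto the principal eigenvectors of $\hat{\Sigma}_i$ and $\tilde{\Sigma}_i$, while $\hat{\gamma}_1$ is the gap between the largest and second-largest eigenvalues of $\hat{\Sigma}_i$. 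Lemma \ref{lem:akshay} then directly controls $\|\hat{\Pi}_1-\tilde{\Pi}_1\|_2$ by the parenthesized expression divided by $\hat{\gamma}_1$.

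The heart of the argument is the passage from projector distance to vector distance. Since eigenvectors are defined only up to sign, I would fix the sign of $\tilde{a}_i$ so that $\langle \hat{a}_i,\tilde{a}_i\rangle\geq 0$, and let $\theta\in[0,\pi/2]$ be the angle between the two unit vectors. A short computation confined to the two-dimensional span of $\hat{a}_i$ and $\tilde{a}_i$ yields the two standard identities $\|\hat{a}_i\hat{a}_i^T-\tilde{a}_i\tilde{a}_i^T\|_2=\sin\theta$ (the nonzero eigenvalues of the rank-two difference are $\pm\sin\theta$) and $\|\hat{a}_i-\tilde{a}_i\|_2=2\sin(\theta/2)$. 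Because $\sin$ is increasing on $[0,\pi/2]$ and $\theta/2\leq\theta$, we have $2\sin(\theta/2)\leq 2\sin\theta$, hence $\|\hat{a}_i-\tilde{a}_i\|_2\leq 2\|\hat{\Pi}_1-\tilde{\Pi}_1\|_2$. This inequality is exactly the source of the factor of $2$ appearing in the statement, and it is a finite-dimensional instance of the Davis--Kahan $\sin\theta$ theorem already invoked in the text.

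Combining the projector bound from Lemma \ref{lem:akshay} with this conversion gives the claimed inequality for a single cluster $i$. To obtain it uniformly over all $i\in[m]$, I would apply Lemma \ref{lem:akshay} to each of the $m$ empirical covariance matrices and take a union bound (in a fully rigorous accounting one replaces $\delta$ by $\delta/m$, which only inflates the logarithmic factor), interpreting $\hat{\gamma}_1$ as the smallest top-eigengap and $\eta$ as the largest squared-norm bound among the clusters. I expect the projector-to-eigenvector conversion to be the only nontrivial step, since getting the correct constant and correctly handling the sign ambiguity is where care is needed; the eigengap hypothesis guarantees the principal eigenvector is well separated, so the rest is a direct substitution.
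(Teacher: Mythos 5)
Your proposal is correct and follows essentially the same route as the paper: specialize Lemma \ref{lem:akshay} to the top eigenvector, fix the sign so that $\langle \hat{a}_i,\tilde{a}_i\rangle\geq 0$, and convert the projector bound into the eigenvector bound with a factor of $2$. The only difference is cosmetic---you obtain $\|\hat{a}_i-\tilde{a}_i\|_2\leq 2\|\hat{\Pi}_1-\tilde{\Pi}_1\|_2$ via the identities $\|\hat{\Pi}_1-\tilde{\Pi}_1\|_2=\sin\theta$ and $\|\hat{a}_i-\tilde{a}_i\|_2=2\sin(\theta/2)$, whereas the paper tests the operator $\hat{\Pi}_1-\tilde{\Pi}_1$ on $z=\hat{a}_i+\tilde{a}_i$; your remark about the union bound over $i\in[m]$ is a point of care the paper's proof leaves implicit.
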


\begin{proof}
Clearly, $\hat{\Pi}_1=\hat{a}_i \hat{a}_i^T$ and $\tilde{\Pi}_1=\tilde{a}_i \tilde{a}_i^T$. As we mentioned in the definition of $\epsilon$-closeness, $\theta_i$ is implicit in the error expression $\|\hat{a}_i-\tilde{a}_i\|_2$ requiring that $\|\hat{a}_i-\tilde{a}_i\|_2\leq \|\hat{a}_i+\tilde{a}_i\|_2$ and consequently
$\langle \hat{a}_i,\tilde{a}_i \rangle\geq 0$. 
Also note that, by definition, for any $z\in\mathbb{R}^{d}$ 
$$\frac{\|(\hat{\Pi}_1-\tilde{\Pi}_1)z\|_2}{\|z\|_2}\leq \|\hat{\Pi}_1-\tilde{\Pi}_1\|_2$$
Now let $z=\hat{a}_i+\tilde{a}_i$. Then
\begin{eqnarray*}
\frac{\|(\hat{\Pi}_1-\tilde{\Pi}_1)z\|_2}{\|z\|_2} &=&
(1+\langle \hat{a}_i,\tilde{a}_i \rangle)
\frac{\|\hat{a}_i-\tilde{a}_i\|_2}{\|\hat{a}_i+\tilde{a}_i\|_2} \\
&\geq& \frac{1}{2}\|\hat{a}_i-\tilde{a}_i\|_2
\end{eqnarray*}
Therefore
$$\|\hat{a}_i-\tilde{a}_i\|_2 \leq 2 \|\hat{\Pi}_1-\tilde{\Pi}_1\|_2$$
and the rest follows from Lemma \ref{lem:akshay}.
\end{proof}

To obtain a lower-bound for the eigengap $\hat{\gamma}_1$ we need to review some of the intermediate results in \cite{r12}. In fact, we compute a lower-bound for $\gamma_1$ of $\Sigma$ which serves as a close approximation of $\hat{\gamma}_1$ when the number of data samples $\ell$ is large. For every $i\in[m]$, let $\Gamma_i$ be the distribution conditioned on $\mathcal{X}_i\neq 0$. Let $\alpha=|\langle u,a_i \rangle|$ for any unit-norm $u$ and let 
$$R_i^2=\mathbb{E}_{\Gamma_i}[\langle a_i,\mathcal{Y} \rangle ^2] = 
1+\sum_{j\neq i} \langle a_i,a_j \rangle ^2 \mathbb{E}_{\Gamma_i}[\mathcal{X}_j^2]$$ 
denote the \textit{projected variance} of $\Gamma_i$ on the direction $u=a_i$. It is shown \cite{r12} that generally
$$\mathbb{E}_{\Gamma_i}[\langle u,\mathcal{Y} \rangle ^2] \leq 
\alpha^2 R_i^2+2\alpha\sqrt{1-\alpha^2}\zeta +(1-\alpha^2)\zeta^2$$
where $\zeta=\max\{\frac{\mu k}{\sqrt{d}},\sqrt{\frac{k}{m}} \}$. 

The principal eigenvector of $\Sigma_i$ can be computed by finding the unit-norm $u$ that maximizes $\mathbb{E}_{\Gamma_i}[\langle u,\mathcal{Y} \rangle ^2]$. Meanwhile, it has been established that for $u=a_i$, $\mathbb{E}_{\Gamma_i}[\langle u,\mathcal{Y} \rangle ^2] = R_i^2$. Therefore, the range of $\alpha$ for the principal eigenvector must satisfy the inequality (for $\alpha\leq 1$)
$$ R_i^2\leq \alpha^2 R_i^2+2\alpha\sqrt{1-\alpha^2}\zeta +(1-\alpha^2)\zeta^2$$
It is not difficult to show this range is
$$\alpha\in [\frac{R_i^2-\zeta^2}{\sqrt{4\zeta^2+(R_i^2-\zeta^2)^2}},1]$$
Now, for the second eigenvector and eigenvalue pair we must find a unit-norm $v$ that satisfies $\langle v,u\rangle=0$ and maximizes $\mathbb{E}_{\Gamma_i}[\langle v,\mathcal{Y} \rangle ^2]$. Define $\beta=|\langle v,a_i \rangle|$. It can be shown that 
$$\beta\in [-\frac{2\zeta}{\sqrt{4\zeta^2+(R_i^2-\zeta^2)^2}},
\frac{2\zeta}{\sqrt{4\zeta^2+(R_i^2-\zeta^2)^2}}]$$
Note that the first and the second largest eigenvalues correspond to projected variances of $\Gamma_i$ on the directions of $u$ and $v$, respectively. Therefore, based on the derived ranges for $\alpha$ and $\beta$, we are able to find the following lower-bound for $\gamma_1$:
$$\gamma_1\geq R_i^2-\left(\frac{3R_i^2\zeta}{R_i^2-\zeta^2}\right)^2$$
Note that $\zeta$ becomes very small as the problem size ($d$, $m$, $n$) becomes large, resulting in 
$\hat{\gamma}_1\approx\gamma_1\approx 1$. Therefore, given a sufficient number of samples, it can be guaranteed that $\tilde{a}_i$ is an accurate estimation of $\hat{a}_i$ and, in turn, an accurate estimation of $a_i$ even when only $p<d$ measurements per sample is available. Once the dictionary has been approximated to within a close distance from the optimal dictionary $A^*PD$, iterative algorithms such as \cite{r6,r7,r8,r9} can assure convergence to a local optimum and therefore perfect recovery as suggested in \cite{r15,r12,r13}. Finally, perfect recovery of the dictionary results in perfect recovery of $X$ and $Y$ given the CS bounds for the number of measurements \cite{r1} which are generally weaker than the stated bounds for the recovery of the dictionary.

\section{Conclusion}
\label{sec:conclusion}

In this work, we studied the conditions for perfect recovery of both the dictionary and the sparse coefficients from linear measurements of the data. The first part of this work brings together some of the recent theories about the uniqueness of dictionary learning and the blind compressed sensing problem. Moreover, we described a `hybrid' random measurement scheme that reduces the theoretical bounds for the minimum number of data samples to guarantee a unique dictionary and thus perfect recovery for blind compressed sensing. In the second part, we discussed the algorithmic aspects of dictionary learning under random linear measurements. It was shown that a polynomial-time algorithm can assure convergence to the generative dictionary given a sufficient number of data samples with high probability. It would be interesting to explore dictionary learning and blind compressed sensing for non-Gaussian random measurements. In particular, when the data matrix is partially observed (i.e. an incomplete matrix), data recovery becomes a matrix completion problem where the elements of the data matrix are assumed to lie in a union of interconnected rank-$k$ subspaces. This is a subject of future work.

\section*{Appendix}

\subsection{Proof of Lemma \ref{lem:forgot}}
Let $X''\coloneqq PDX'$. Note that $A'X'=APDX'=AX''$. Thus, $\mathcal{M}^p_G(AX'')=\mathcal{M}^p_G(AX)$. Our goal is to show $X''=X$ and thus $A'X'=AX''=AX$. To prove $X''=X$, we must show that for every $j\in[n]$, $\Phi_j A x''_j=\Phi_j A x_j$ results in $x''_j=x_j$ with probability one. For simplicity, we omit the sample index $j$ in the rest of the proof.

Let $S$ and $S''$ respectively denote the sets of non-zero indices of $x$ and $x''$ where $|S|,|S''|\leq k$.
Rewrite $\Phi Ax''=\Phi Ax$ as $\Phi A(x''-x)=0$. Note that $x''-x$ is supported on $T=S\cup S''$ where $|T|\leq 2k$. Therefore, we must show that, with probability one, 
\begin{equation*}
\forall T\in{[m]\choose 2k}\colon rank(\Phi A_T)=|T|
\end{equation*}
necessitating $x''-x=0$ or $x''=x$. Since $Spark(A)>2k$, every $2k$ columns of $A$ are linearly independent and we are able to perform a Gram-Schmidt orthogonalization on $A_T$ to get $A_T=UV$ where $U\in\mathbb{R}^{d\times 2k}$ is orthonormal ($d\geq 2k$) and $V$ is a full-rank square matrix. Hence, $\Phi U\in\mathbb{R}^{p\times 2k}$ is distributed according to i.i.d. Gaussian and is full-rank with probability one \cite{r27}. We conclude the proof by noticing that $rank(\Phi UV)=rank(\Phi U)=2k$ since $V$ is a full-rank square matrix. 

\subsection{Proof of Lemma \ref{lem:manifold}}

Denote a general linear matrix measurement operator 
$\mathcal{M}\colon\mathbb{R}^{d\times n} \rightarrow \mathbb{R}^\tau$ such that 
$\mathcal{M}(Y)=\zeta=\left[ \zeta_1,\zeta_2,\dots,\zeta_\tau \right]^T$, 
$\zeta_i=\langle M_i,Y\rangle$ for $i\in[\tau]$. If we denote

\begin{equation}
\Phi=\left[
\begin{array}{c}
vec(M_1)^T\\
vec(M_2)^T\\
\vdots \\
vec(M_T)^T
\end{array}
\right]\in\mathbb{R}^{\tau\times d n}
\label{eq:phi}
\end{equation}
then $\mathcal{M}(Y)=\Phi vec(Y)$. Specifically, under the Gaussian measurement scheme for BCS, we have:
\begin{equation}
\Phi^{CS}_G=\left[
\begin{array}{ccc}
\Phi_1 	& 				& \\
				& \ddots 	& \\
				& 				& \Phi_n
\end{array}
\right]\in\mathbb{R}^{p n\times d n}
\end{equation}
where non-zero entries of $\Phi^{CS}_G$ are i.i.d. Gaussian with mean zero and variance $1/d$.

The following result from \cite{r11} gives the required number of linear measurements to guarantee (with probability one) that a rank-$r$ matrix does not fall into the null-space of the measurement operator.


\begin{lemma}[\cite{r11}, Theorem 3.1]
Let $\mathcal{R}$ be a $q$-dimensional continuously differentiable manifold over the set of $d\times d$ real matrices. Suppose we take $\tau\geq q+1$ linear measurements from $Y\in\mathcal{R}$. Assume there exists a constant $C=C(d)$ such that  $\mathbb{P}(|\langle M_i,X\rangle|<\epsilon)<C\epsilon$ for every $Y$ with $\|Y\|_F=1$. Further assume that for each $Y\neq 0$ that the random variables $\{\langle M_i,Y\rangle\}$ are independent. Then with probability one,
$Null(\mathcal{M})\cap \mathcal{R}\setminus\{0\}=\emptyset$.
\label{lem:r11}
\end{lemma}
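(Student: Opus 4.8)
The plan is to show that the \textbf{bad event} $\mathcal{B}=\{\exists\,Y\in\mathcal{R}\setminus\{0\}:\mathcal{M}(Y)=0\}$ has probability zero, by pitting the anti-concentration (small-ball) hypothesis against the dimension gap $\tau\ge q+1$ through a covering argument. First I would reduce $\mathcal{R}\setminus\{0\}$ to a countable union of compact patches: since $\mathcal{R}$ is a $q$-dimensional $C^1$ manifold, I write $\mathcal{R}\setminus\{0\}=\bigcup_t K_t$ where each $K_t$ is compact, bounded, and bounded away from the origin (so that $c_0\le\|Y\|_F\le C_0$ on $K_t$ for constants depending on $t$). Because a countable union of null events is null, it then suffices to show $\mathbb{P}(\exists\,Y\in K:\mathcal{M}(Y)=0)=0$ for a single such patch $K$.

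For the probabilistic core, I fix a point and use the independence of $\{\langle M_i,\cdot\rangle\}_{i\in[\tau]}$ together with the small-ball hypothesis applied to $Y_0/\|Y_0\|_F$ to obtain, for any fixed $Y_0\in K$ (so $\|Y_0\|_F\ge c_0$),
\[
\mathbb{P}\big(|\langle M_i,Y_0\rangle|<\epsilon'\ \text{for all } i\in[\tau]\big)<\Big(\tfrac{C\epsilon'}{c_0}\Big)^{\tau}.
\]
I then take an $\epsilon$-net $\mathcal{N}_\epsilon\subset K$ in Frobenius distance whose cardinality satisfies $|\mathcal{N}_\epsilon|\le C_K\,\epsilon^{-q}$, this being the covering number of a compact $q$-dimensional manifold patch, and apply a union bound across the net.

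The step that links an exact null-space element to the net is a continuity estimate: if $Y\in K$ has $\mathcal{M}(Y)=0$ and $Y_0\in\mathcal{N}_\epsilon$ satisfies $\|Y-Y_0\|_F\le\epsilon$, then $\langle M_i,Y_0\rangle=\langle M_i,Y_0-Y\rangle$ gives $|\langle M_i,Y_0\rangle|\le\|M_i\|_F\,\epsilon$ for every $i$. To neutralize the data-dependent factors $\|M_i\|_F$, I condition on $E_R=\{\max_i\|M_i\|_F\le R\}$; on $E_R$, occurrence of $\mathcal{B}$ forces some net point into the small-measurement event with threshold $\epsilon'=R\epsilon$, so combining the net bound with the per-point estimate yields
\[
\mathbb{P}\big(\mathcal{B}\cap E_R\big)\le C_K\,\epsilon^{-q}\Big(\tfrac{CR\epsilon}{c_0}\Big)^{\tau}=C_K\Big(\tfrac{CR}{c_0}\Big)^{\tau}\epsilon^{\,\tau-q}.
\]
Since $\tau-q\ge1$, letting $\epsilon\to0$ shows $\mathbb{P}(\mathcal{B}\cap E_R)=0$ for every fixed $R$; because the $M_i$ have finite norm almost surely, $\mathbb{P}(E_R)\to1$ as $R\to\infty$, so $\mathbb{P}(\mathcal{B})=0$, and the countable union over patches completes the argument.

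The part I expect to be the genuine obstacle is the geometric covering/continuity step rather than the probabilistic one. I must justify that the $\epsilon$-covering number of each compact manifold patch grows only like $\epsilon^{-q}$ with a constant $C_K$ uniform over $K$, and that the nearest-net-point linkage $\|Y-Y_0\|_F\le\epsilon$ holds uniformly; both rely on continuous differentiability of $\mathcal{R}$ to supply local Lipschitz charts and bounded geometry on compacta. By contrast, the probabilistic ingredients are immediate: independence plus the small-ball hypothesis give the $(\,\cdot\,)^{\tau}$ factor directly, and the decisive mechanism is merely that the dimension gap $\tau\ge q+1$ renders the exponent $\tau-q$ strictly positive, so refining the net ($\epsilon\to0$) drives the union bound to zero.
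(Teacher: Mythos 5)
This lemma is imported by the paper verbatim from reference \cite{r11} (Theorem 3.1 there) and is not proved in the paper itself, so there is no in-paper proof to compare against. Your covering/anti-concentration argument --- compact exhaustion of $\mathcal{R}\setminus\{0\}$, an $\epsilon$-net of cardinality $O(\epsilon^{-q})$, the independence and small-ball hypotheses giving a per-point bound of order $\epsilon^{\tau}$, and the dimension gap $\tau-q\geq 1$ killing the union bound as $\epsilon\to 0$ --- is correct and is essentially the standard proof given in \cite{r11}, including the detail of handling $\|M_i\|_F$ via truncation.
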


A careful inspection of the derivation of the above theorem in \cite{r11} reveals that this result can be extended to include the manifolds over the set of rectangular matrices $Y\in\mathbb{R}^{d\times n}$. Specifically, for the manifold over rank-$r$ $d\times n$ matrices we have (see \cite{r11.1} for example)
$q=\dim(\mathcal{R})=r(d+n-r)$.

The following lemma establishes a sufficient lower bound for $\tau$ to guarantee that $\mathcal{M}(A)=\mathcal{M}(B)$ results in $A=B$.

\begin{lemma}
Let $\mathcal{R}$ denote the manifold over the set of rank-$r$ $d\times n$ matrices and let $\mathcal{R}'$ denote the manifold over the set of rank-$2r$ $d\times n$ matrices. Also let $\mathcal{M}\colon\mathbb{R}^{d\times n} \rightarrow \mathbb{R}^\tau$ with $\tau\geq \dim(\mathcal{R}')+1=2r(d+n-2r)+1$. Then, for any $A,B\in\mathcal{R}$, 
$\mathcal{M}(A)=\mathcal{M}(B)$ implies $A=B$ with probability one. 
\end{lemma}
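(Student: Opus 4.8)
The plan is to reduce the statement to Lemma \ref{lem:r11}. First I would exploit linearity: since $\mathcal{M}$ is linear, the hypothesis $\mathcal{M}(A)=\mathcal{M}(B)$ is equivalent to $\mathcal{M}(A-B)=0$, i.e. $A-B\in Null(\mathcal{M})$. Because $A$ and $B$ are each of rank $r$, their difference $A-B$ has rank at most $2r$. The goal therefore becomes showing that, with probability one, the only matrix of rank at most $2r$ lying in $Null(\mathcal{M})$ is the zero matrix.

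The subtlety, and the step I expect to be the main obstacle, is that $A-B$ need \emph{not} have rank exactly $2r$; its rank can be any integer $s$ with $0\le s\le 2r$. Consequently it is not enough to apply Lemma \ref{lem:r11} to the single manifold $\mathcal{R}'$ of rank-exactly-$2r$ matrices, since a nonzero element of $Null(\mathcal{M})$ could live in a lower-rank stratum. I would instead decompose the set of matrices of rank at most $2r$ into the finite union $\bigcup_{s=1}^{2r}\mathcal{R}_s$, where $\mathcal{R}_s$ denotes the manifold of $d\times n$ matrices of rank exactly $s$, whose dimension is $\dim(\mathcal{R}_s)=s(d+n-s)$.

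Next I would verify that the single hypothesis $\tau\ge \dim(\mathcal{R}')+1$ already controls every stratum at once. Because rank-$2r$ matrices exist we have $2r\le\min(d,n)\le (d+n)/2$, so on the range $1\le s\le 2r$ the function $s\mapsto s(d+n-s)$ is increasing; hence $\dim(\mathcal{R}_s)\le\dim(\mathcal{R}')=2r(d+n-2r)$ for every such $s$. Therefore $\tau\ge\dim(\mathcal{R}_s)+1$ holds for each $s\in\{1,\dots,2r\}$, and Lemma \ref{lem:r11}, in its rectangular-matrix form and using the anti-concentration bound and independence of the Gaussian measurements that underlie $\mathcal{M}$, yields $Null(\mathcal{M})\cap\mathcal{R}_s\setminus\{0\}=\emptyset$ with probability one for each fixed $s$.

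Finally I would take a union bound over the finitely many values $s=1,\dots,2r$, which preserves probability one, to conclude that with probability one $Null(\mathcal{M})$ contains no nonzero matrix of rank at most $2r$. Since $A-B$ is precisely such a matrix, this forces $A-B=0$, i.e. $A=B$, completing the argument.
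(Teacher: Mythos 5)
Your proposal is correct and follows essentially the same route as the paper: reduce to $\mathcal{M}(A-B)=0$, observe that $\mathrm{rank}(A-B)\leq 2r$ so the difference lies in some lower-rank manifold $\mathcal{R}''$ whose dimension is dominated by $\dim(\mathcal{R}')$, and invoke Lemma \ref{lem:r11}. Your explicit stratification over all ranks $s\leq 2r$ with a finite union bound, and your justification that $s\mapsto s(d+n-s)$ is increasing on the relevant range, merely make rigorous what the paper states as ``clearly.''
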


\begin{proof}
Clearly, $\tau\geq \dim(\mathcal{R}')$ implies $\tau\geq \dim(\mathcal{R}'')$ for any $\mathcal{R}''$ over the set of rank-$r''$ $d\times n$ matrices with $r''\leq 2r$. Also note that $rank(A-B)\leq 2k$, thus $A-B\in\mathcal{R}''$. Now, since $\mathcal{M}(A-B)=0$ and 
$Null(\mathcal{M})\cap \mathcal{R}''\setminus\{0\}=\emptyset$ 
(with probability one, according to Lemma \ref{lem:r11}), we must have $A-B=0$ or $A=B$ with probability one.
\end{proof}

It only remains to show that $\mathcal{M}_G^p$ satisfies the requirements of Lemma \ref{lem:r11}. As noted in \cite{r11}, the requirement $\mathbb{P}(|\langle M_i,Y\rangle|<\epsilon)<C\epsilon$ requires that the densities of 
$\langle M_i,Y\rangle$ do not spike at the origin; a sufficient condition for this to hold for every $Y$ with $\|Y\|_F=1$ is that each $M_i$ has i.i.d. entries with a continuous density. Note that non-zero entries of $\Phi_G^{CS}$ are i.i.d. Gaussian and cover every column in $Y$. Therefore, none of the entries of $\Phi_G^{CS}vec(Y)$ would spike at the origin or equivalently there exists $C=C(d,n)$ so that $\mathbb{P}(|\langle (\Phi_j^T)_i,y_j\rangle|<\epsilon)<C\epsilon$ with $\|y_j\|_2=\Omega(1/\sqrt{n})$ given that the vector $(\Phi_j^T)_i$ is drawn from a continuous distribution. 

\subsection{Proof of Lemma \ref{lem:rank-check}}

Let $r=rank(Y_J)$ and $k=rank(FY_J)$. Perform a Gram-Schmidt orthogonalization on $Y_J$ to obtain $Y_J=UV$ where $U\in\mathbb{R}^{d\times r}$ has orthogonal columns and $V\in\mathbb{R}^{r\times |J|}$ is full-rank; hence, given $r\leq |J|$, we have $k=rank(FUV)=rank(FU)$. Note that, since $U$ is orthogonal and $F$ is i.i.d. Gaussian, $FU$ is also i.i.d. Gaussian. Hence, with probability one, $FU$ is full-rank \cite{r27} and $k=\min(p_f,r)$. To conclude the proof, note that when $k<p_f$, necessarily we have $k=r$.



\begin{thebibliography}{99}

\bibitem{r1} E. Candes, J. Romberg and T. Tao, ``Robust uncertainty principles: Exact signal reconstruction from highly incomplete frequency information,'' \emph{IEEE Transactions on Information Theory}, vol. 52, no. 2, pp. 489–-509, 2006.

\bibitem{r1.9} B. A. Olshausen and B. J. Field, ``Sparse coding with an overcomplete basis set: A strategy employed by V1?,'' \emph{Vision Research}, vol. 37, pp. 3311–-3325, 1997.

\bibitem{r2} R. Gribonval and K. Schnass, ``Dictionary Identification - Sparse Matrix-Factorisation via $\ell_1$-Minimisation,'' \emph{IEEE Transactions on Information Theory}, vol. 56(7): 3523–-3539, 2010.

\bibitem{r3} M. Aharon, M. Elad and A.M. Bruckstein, ``On the uniqueness of overcomplete dictionaries,
and a practical way to retrieve them,'' \emph{Journal of Linear Algebra and Its Applications}, 416(48--67), 2006.

\bibitem{r4} Christopher Hillar and Friedrich T Sommer, ``When can dictionary learning uniquely recover sparse
data from subsamples?,'' \emph{arXiv preprint arXiv:1106.3616}, 2011.

\bibitem{r5} S. Gleichman and Y. Eldar, ``Blind compressed sensing,'' \emph{IEEE Transactions on Information Theory}, vol. 57, no. 10, pp. 6958–-6975, 2011.

\bibitem{r6} Farhad Pourkamali-Anaraki, Stephen Becker and Shannon M. Hughes, ``Efficient Dictionary Learning via Very Sparse Random Projections,'' \emph{arXiv:1504.01169}, 2015.

\bibitem{r7} C. Studer and R. G. Baraniuk, ``Dictionary learning from sparsely corrupted or compressed signals,'' \emph{In IEEE Intl. Conf. on Acoustics, Speech and Signal Processing (ICASSP)}, pp. 3341–-3344, 2012.

\bibitem{r8} Z. Xing, M. Zhou, A. Castrodad, G. Sapiro and L. Carin, ``Dictionary learning for noisy and incomplete hyperspectral images,'' \emph{SIAM Journal on Imaging Sciences}, vol. 5, no. 1, pp. 33-–56, 2012.

\bibitem{r9} M. Aghagolzadeh and H. Radha, ``Adaptive Dictionaries for Compressive Imaging,'' \emph{IEEE Global Conference on Signal and Information Processing (GlobalSIP)}, December 2013. 

\bibitem{r10} J. Silva, M. Chen, Y. Eldar, G. Sapiro and L. Carin, ``Blind compressed sensing over a structured union of subspaces,'' \emph{arXiv.org, abs/1103.2469}, 2011.

\bibitem{r10.1} Y. Eldar and M. Mishali, ``Robust recovery of signals from a structured union of subspaces,'' \emph{IEEE Transactions on Information Theory}, vol. 55, no. 11, pp. 5302–-5316, 2009.

\bibitem{r11} Y. Eldar, D. Needell and Y. Plan, ``Uniqueness conditions for low-rank matrix recovery,'' \emph{Applied and Computational Harmonic Analysis}, vol. 33, no. 2, pp. 309–-314, 2012.

\bibitem{r11.1} E. J. Candes and B. Recht, ``Exact matrix completion via convex optimization,'' \emph{Foundations of Computational Mathematics}, vol. 9, pp. 717--772, 2009.



\bibitem{r15} A. Agarwal, A. Anandkumar, P. Jain, P. Netrapalli and R. Tandon, ``Learning sparsely used overcomplete dictionaries via alternating minimization,'' \emph{arXiv preprint 1310.7991}, 2013.

\bibitem{r12} Sanjeev Arora, Rong Ge and Ankur Moitra, ``New algorithms for learning incoherent and overcomplete
dictionaries,'' \emph{arXiv preprint, arXiv:1308.6273}, 2013.

\bibitem{r13} A. Agarwal, A. Anandkumar and P. Netrapalli, ``A clustering approach to learn sparsely-used overcomplete
dictionaries,'' \emph{arXiv preprint, arXiv:1309.1952}, 2013.



\bibitem{r14.1} Akshay Krishnamurthy, Martin Azizyan and Aarti Singh, ``Subspace Learning from Extremely Compressed Measurements,'' emph{arXiv preprint, arXiv:1404.0751}, 2014.

\bibitem{r14.2} C. Davis and W. M. Kahan, ``The rotation of eigenvectors by a perturbation. III,'' \emph{SIAM Journal of 
Numerical Analysis}, vol. 7, pp. 1--46, 1970.





\bibitem{r18} K. Schnass, ``On the identifiability of overcomplete dictionaries via the minimisation principle underlying K-SVD,'' \emph{CoRR, vol. abs/1301.3375}, 2013.






\bibitem{r27} X. Feng and Z. Zhang, ``The rank of a random matrix,'' \emph{Applied Mathematics and Computation}, vol. 185, pp. 689-–694, 2007.

\bibitem{r22} H. Rauhut, K. Schnass and P. Vandergheynst, ``Compressed sensing and redundant dictionaries,'' \emph{IEEE Transactions on Information Theory}, vol. 54, no. 5, May 2008.








\end{thebibliography}
\end{document}